\documentclass[12pt,draftclsnofoot,onecolumn]{IEEEtran}

\newcommand{\diag}{\mathop{\rm diag}}

\makeatletter
\newcommand{\vast}{\bBigg@{4}}
\newcommand{\vastt}{\bBigg@{6}}
\makeatother
\usepackage[final]{graphicx}
\usepackage{subcaption}
\usepackage{color}
\usepackage{multirow}
\usepackage{graphicx}
\usepackage{epstopdf}
\usepackage[cmex10]{amsmath}
\usepackage{amssymb}

\usepackage{multirow}
\usepackage{algpseudocode}
\usepackage{amsthm}
\usepackage{cite}
\theoremstyle{definition}
\theoremstyle{remark}

\newtheorem{lem}{Lemma}
\newtheorem*{thm}{Theorem}
\newtheorem{cor}{Corollary}
\newtheorem*{prop}{Proposition}

\makeatletter
\g@addto@macro\th@remark{\thm@headpunct{\normalfont:}}
\makeatother
\makeatletter
\newcommand{\distas}[1]{\mathbin{\overset{#1}{\kern\z@\sim}}}%
\newsavebox{\mybox}\newsavebox{\mysim}
\newcommand{\distras}[1]{%
  \savebox{\mybox}{\hbox{\kern3pt$\scriptstyle#1$\kern3pt}}%
  \savebox{\mysim}{\hbox{$\sim$}}%
  \mathbin{\overset{#1}{\kern\z@\resizebox{\wd\mybox}{\ht\mysim}{$\sim$}}}%
	}

\let\emptyset\varnothing
\allowdisplaybreaks

\begin{document}

\title{Simultaneous Spectrum Sensing and Data Reception for Cognitive Spatial Multiplexing Distributed Systems}

\author{Nikolaos~I.~Miridakis, Theodoros~A.~Tsiftsis,~\IEEEmembership{Senior Member,~IEEE},\\ George C. Alexandropoulos,~\IEEEmembership{Senior Member,~IEEE},\\ and M\'erouane Debbah,~\IEEEmembership{Fellow,~IEEE}
\thanks{N. I. Miridakis is with the Department of Computer Engineering, Piraeus University of Applied Sciences, 12244 Aegaleo, Greece (e-mail: nikozm@unipi.gr).}
\thanks{T. A. Tsiftsis is with the Department of Electrical Engineering, Technological Educational Institute of Central Greece, 35100 Lamia, Greece and with the School of Engineering, Nazarbayev University, 010000 Astana, Kazakhstan (e-mails: tsiftsis@teiste.gr, theodoros.tsiftsis@nu.edu.kz).}
\thanks{G. C. Alexandropoulos and M. Debbah are with the Mathematical and Algorithmic Sciences Lab, France Research Center, Huawei Technologies France, 92100 Boulogne-Billancourt, France (e-mails: \{george.alexandropoulos,merouane.debbah\}@huawei.com).}
}

\markboth{}%
{Simultaneous Spectrum Sensing and Data Reception for Distributed Cognitive Spatial Multiplexing Distributed Systems}

\maketitle

\begin{abstract}
A multi-user cognitive (secondary) radio system is considered, where the spatial multiplexing mode of operation is implemented amongst the nodes, under the presence of multiple primary transmissions. The secondary receiver carries out minimum mean-squared error (MMSE) detection to effectively decode the secondary data streams, while it performs spectrum sensing at the remaining signal to capture the presence of primary activity or not. New analytical closed-form expressions regarding some important system measures are obtained, namely, the outage and detection probabilities; the transmission power of the secondary nodes; the probability of unexpected interference at the primary nodes; {\color{black}and the detection efficiency with the aid of the area under the receive operating characteristics curve}. The realistic scenarios of channel fading time variation and channel estimation errors are encountered for the derived results. Finally, the enclosed numerical results verify the accuracy of the proposed framework, while some useful engineering insights are also revealed, such as the key role of the detection accuracy to the overall performance and the impact of transmission power from the secondary nodes to the primary system. 
\end{abstract}

\begin{IEEEkeywords}
Cognitive radio, detection probability, imperfect channel estimation, minimum mean-squared error (MMSE), outage probability, spatial multiplexing, spectrum sensing.
\end{IEEEkeywords}

\IEEEpeerreviewmaketitle

\section{Introduction}
\IEEEPARstart{C}{ognitive} radio (CR) has emerged as one of the most promising technologies to resolve the issue of spectrum scarcity, caused by the escalating growth in wireless data traffic of next-generation networks \cite{j:deepsensing}. One of the principal requirements of CR is the effectiveness of spectrum sharing performed by secondary (unlicensed) nodes, which is expected to intelligently mitigate any harmful interference caused to the primary (licensed) network nodes. This requirement is directly related to the accuracy of spectrum sensing techniques, reflecting the reliable detection of primary transmission(s).

On the other hand, placing multiple antennas on each cognitive node represents a fruitful option since the system capacity in terms of data rate can be greatly enhanced. Spatial multiplexing represents one of the most prominent techniques used for multiple input-multiple output (MIMO) transmission systems \cite{j:gesbert}. For computational savings at the receiver side, there has been a prime interest in the class of linear detectors, such as zero-forcing (ZF) and minimum mean-squared error (MMSE). It is widely known that MMSE outperforms ZF, especially in low-to-medium signal-to-noise (SNR) regions, at the cost of a slightly higher computational burden, since the noise variance is required in this case. In addition, when MIMO technology is combined with distributed antenna systems (DAS), the so-called distributed-MIMO (D-MIMO) transmission is emerged. The success behind D-MIMO relies on the multiplexing gains, which are produced by the classical MIMO transmission, and the diversity gains, which are manifested from the use of DAS \cite{j:das}.

Due to the complementary benefits of CR and D-MIMO, the cognitive (D-)MIMO systems are of paramount research interest nowadays, e.g., see \cite{j:Sboui,j:Gupta,j:Zhongyuan,j:Sharawi} and references therein.

\subsection{Related Work and Motivation}
The performance of spectrum sensing, i.e., the accuracy of the detection method used by the cognitive system plays a key role to the performance of both the primary and secondary network. It acts as an important tool for finding idle spectrum instances (the so-called \emph{spectrum holes} \cite{j:Haykin}) to efficiently deliver cognitive data, while protecting the communication quality of the primary service at the same time. Several spectrum sensing approaches have been proposed so far to preserve transparency of CR networks, which can be categorized into two main types; quiet \cite{j:Peh} and active \cite{j:Song}.

The quiet spectrum sensing type is the conventional approach in which each potential cognitive transmitter first senses the spectrum for a fixed time-sensing duration and then transmits its data in the remaining time, when it senses the channel as idle. The main problem of this approach is the capacity reduction in terms of cognitive data transmission within a given frame duration. Moreover, the detection accuracy is questionable by adopting the quiet type, since the sensing-time duration is rather limited (i.e., only a fraction of the entire frame duration) and, hence, the required number of sensing samples is constrained.

In order to overcome this problem, the more sophisticated active sensing type has been proposed. The idea behind this approach relies on the improvement of the former shortages produced by quiet sensing. In particular, a simultaneous spectrum sensing and data transmission technique was proposed in \cite{c:Stotas}, where the receiver first cancels the secondary data using interference cancellation and then senses the remaining signal for the presence or absence of a primary activity. However, the scenario of a single transmitter-receiver pair for the cognitive system was considered in \cite{c:Stotas} with the presence of only one primary node, a rather infeasible condition for practical applications. Other active sensing techniques for multi-user cognitive systems were proposed in \cite{j:Song} and \cite{j:Moghimi}. In both studies, it was assumed that some secondary nodes transmit while others perform spectrum sensing. In the case of a primary signal detection, the latter nodes inform the former ones about the primary activity to stop their transmissions. Nevertheless, several problems arise by following these methods; more spectrum resources are required because of the signaling overhead caused by the informing process, whereas extra power resources are consumed from the sensing nodes during spectrum sensing and because of transmitting their sensing reports. 

More recently, authors in \cite{j:Tsakalaki} and \cite{j:Jihaeng} proposed a spatial isolation technique on the antennas of each cognitive node in a sense that some antennas are devoted for spectrum sensing while others for data transmission. The main drawback of this approach is the large amount of self-interference produced during spectrum sensing, which can not always be sufficiently canceled. Hardware constraints and/or impairments represent an immediate obstacle, whereas an appropriate physical distance between the sensing and transmitting antennas should be maintained (i.e., in the order of $20-40$cm \cite{j:Duarte,c:Bharadia}), which is not always feasible or preferable for simple small-sized equipment.

In addition, the concept of simultaneous data reception and spectrum sensing for single-antenna nodes was studied in \cite{j:jeong2011channel,c:Karunakaran2014LTE}, while for multiple-antenna nodes in \cite{c:Karunakaran2014}. However, these works used the central limit theorem to approximate the total received signal as a Gaussian input (invoking the constraint of sufficiently large amount of received samples), whereas they provided only semi-analytical and/or simulation results with respect to the system performance.

Capitalizing on the aforementioned observations, in this paper, a new simultaneous (active) spectrum sensing and data transmission approach for CR networks is presented. The spectrum sensing is performed at the secondary receiver upon the overall signal reception from multiple secondary transmitters. The spatial multiplexing mode of operation is adopted, for the first time, where all the potential secondary transmitters send their data streams simultaneously in a given frame duration. Thus, the self-interference problem is tackled, since all antennas at the receiver are used first for signal detection/decoding for the secondary data and then for spectrum sensing in the same frame duration. The receiver utilizes the linear MMSE approach to efficiently detect the secondary streams. Since the noise variance is, in principle, a requisite for the MMSE detection/decoding, the optimum energy detector (ED) can be used for the following spectrum sensing process (which also requires the knowledge of the noise variance). However, since the spectrum sensing is implemented at the receiver, it is possible that a primary activity in the vicinity of one or more secondary transmitters may not be sensed by the receiver, mainly due to the different link distances and/or independent signal propagation losses. To avoid the latter \emph{hidden terminal} problem, a distributed power allocation scheme is implemented by each secondary transmitter, upon signal transmission. Based on this scheme, each secondary transmitter appropriately adjusts its power in order not to cause any harmful interference to the potentially active primary node(s), preserving transparency of the secondary activity.  

Overall, the main benefits of this work are twofold: (a) an efficient tradeoff between sensing time and data transmission time and its relevant computation is no longer an issue; and (b) the self-interference problem is effectively mitigated, since the simultaneous transmission and spectrum sensing are implemented by different (i.e., sufficiently separated in terms of transmission wavelength) nodes.

\subsection{Contributions}
The contributions of this work are summarized as follows:

\begin{itemize}
	\item A new mode-of-operation and protocol design for cognitive networks is presented and analytically described. The novelty of this scheme relies on the fact that it uses the spatial multiplexing transmission scheme, where multiple single-antenna secondary nodes may send their streams simultaneously to a multiple-antenna secondary receiver, under the presence of multiple primary nodes. Independent and non-identically distributed (i.n.i.d.) statistics are considered, suitable for practical networking setups (i.e., different link-distances amongst the primary and secondary nodes). To this end, the considered secondary system forms a (virtual) D-MIMO infrastructure. The receiver simultaneously performs signal detection and spectrum sensing in the same frame duration. Further, a distributed power allocation scheme is applied on the involved secondary transmitters. 
	\item New analytical closed-form expressions are derived for some important system measures when all signals undergo Rayleigh channel fading, namely, the outage and detection probabilities, the transmission power for each secondary node and the probability of unexpected interference at the primary nodes.
	\item \color{black}{As it is explicitly indicated in the upcoming analysis, the accuracy of the detection scheme plays a key role to the system performance. Thereby, we further investigate the detection performance with the aid of the receive operating characteristics (ROC) curves, and a solid performance measure, the area under the ROC curve (AUC). A new exact closed-form expression of AUC for the considered system is also obtained.} 
	\item For the above derivations, the channel aging effect and channel estimation errors are both considered. In other words, the analysis incorporates outdated channel state information (CSI) and/or imperfect CSI for i.n.i.d. Rayleigh fading channels. The results are simplified for the scenario of independent and identically distributed (i.i.d.) statistics.
\end{itemize}

\subsection{Organization of the paper}
The rest of this paper is organized as follows. This Section continues with some notational definitions for the most important mathematical symbols used in the subsequent analysis. In Section \ref{System Model}, the considered system model and the proposed mode of operation are described in detail. Key statistical derivations regarding the received signal-to-interference-plus-noise ratio (SINR) are obtained in Section \ref{Statistics of SINR}. In Section \ref{Performance Metrics}, the considered system is thoroughly analyzed, whereas the aforementioned performance measures are obtained in closed form. Further, important insights regarding the transmission power used by the secondary nodes are presented in Section \ref{Impact of the Transmission Power Used by the Secondary Network}. In Section \ref{Numerical Results}, the proposed framework is validated and cross-compared with simulation results, while some useful engineering insights are revealed. Finally, Section \ref{Conclusion} concludes the paper.  

\emph{Notation}: Vectors and matrices are represented by lowercase bold typeface and uppercase bold typeface letters, respectively. Also, $\mathbf{X}^{-1}$ is the inverse of $\mathbf{X}$ and $\mathbf{x}_{i}$ denotes the $i$th coefficient of $\mathbf{x}$. A diagonal matrix with entries $x_{1},\cdots,x_{n}$ is defined as $\diag\{x_{i}\}^{n}_{i=1}$. The superscripts $(\cdot)^{T}$ and $(\cdot)^{\mathcal{H}}$ denote transposition and Hermitian transposition, respectively, $\|\cdot\|$ corresponds to the vector Euclidean norm, while $|\cdot|$ represents absolute (scalar) value. In addition, $\mathbf{I}_{v}$ stands for the $v\times v$ identity matrix, $\mathbb{E}[\cdot]$ is the expectation operator, $\overset{\text{d}}=$ represents equality in probability distributions and $\text{Pr}[\cdot]$ returns probability. Also, $f_{X}(\cdot)$ and $F_{X}(\cdot)$ represent probability density function (PDF) and cumulative distribution function (CDF) of the random variable (RV) $X$, respectively. Complex-valued Gaussian RVs with mean $\mu$ and variance $\sigma^{2}$, while chi-squared RVs with $v$ degrees-of-freedom are denoted, respectively, as $\mathcal{CN}(\mu,\sigma^{2})$ and $\mathcal{X}^{2}_{2v}$. Furthermore, $\Gamma(a)\triangleq (a-1)!$ (with $a\in \mathbb{N}^{+}$) denotes the Gamma function \cite[Eq. (8.310.1)]{tables}, $\Gamma(\cdot.,\cdot)$ is the upper incomplete Gamma function \cite[Eq. (8.350.2)]{tables}, while $(\cdot)_{p}$ is the Pochhammer symbol with $p \in \mathbb{N}$ \cite[p. xliii]{tables}. Further, $J_{0}(\cdot)$ represents the zeroth-order Bessel function of the first kind \cite[Eq. (8.441.1)]{tables}, ${}_1F_{1}(\cdot,\cdot;\cdot)$ denotes the Kummer's confluent hypergeometric function \cite[Eq. (9.210.1)]{tables}, ${}_2F_{1}(\cdot,\cdot,\cdot;\cdot)$ denotes the Gauss hypergeometric function \cite[Eq. (9.100)]{tables}, and $Q_{\nu}(\cdot,\cdot)$ is the generalized $\nu$th order Marcum-$Q$ function \cite{b:marcum}.

\section{System Model}
\label{System Model}
Consider a cognitive (secondary) communication system, which is consisted of $m_{c}$ single-antenna cognitive transmitters and a receiver equipped with $N\geq m_{c}$ antennas\footnote{It follows from the subsequent analysis that the considered system is equivalent to the case when a single cognitive transmitter is used equipped with $m_{c}$ antennas.} operating under the presence of $m_{p}$ single-antenna primary nodes. Notice that although $N\geq m_{c}$ is a necessary condition in order to capture the available degrees-of-freedom during the detection of the streams from the cognitive transmitting nodes, it holds that $N\lessgtr (m_{p}+m_{c})$. Moreover, i.n.i.d. Rayleigh flat fading channels are assumed, reflecting non-equal distances among the involved nodes with respect to the receiver, an appropriate condition for practical applications.

The spatial multiplexing mode of operation is implemented in the secondary system, where $m_{c}$ independent data streams are simultaneously transmitted by the corresponding secondary nodes. A suboptimal yet quite efficient detection scheme is adopted, the so-called linear MMSE, which is performed at the secondary receiver.

Letting $M\triangleq m_{p}+m_{c}$, the received signal at the $n$th sample time-instance reads as
\begin{align}
\mathbf{y}[n]=\hat{\mathbf{H}}[n] \mathbf{s}[n]+\mathbf{w}[n],
\label{eq1}
\end{align} 
where $\mathbf{y}[n] \in \mathbb{C}^{N\times 1}$, $\hat{\mathbf{H}}[n] \in \mathbb{C}^{N\times M}$, $\mathbf{s}[n] \in \mathbb{C}^{M\times 1}$ and $\mathbf{w}[n] \in \mathbb{C}^{N\times 1}$ denote the received signal, the estimated channel matrix, the transmitted signal and the additive white Gaussian noise (AWGN), respectively. It holds that $\mathbf{w}\overset{\text{d}}=\mathcal{CN}(\mathbf{0},N_{0}\mathbf{I}_{N})$ with $N_{0}$ denoting the AWGN variance and $\mathbf{s}=[s_{1},\ldots,s_{m_{p}},s_{1},\ldots,s_{m_{c}}]^{T}$ with $\mathbb{E}[\mathbf{s}\mathbf{s}^{\mathcal{H}}]=\mathbf{I}_{M}$. In addition, $\hat{\mathbf{H}}=[\hat{\mathbf{h}}_{1},\ldots,\hat{\mathbf{h}}_{m_{p}},\hat{\mathbf{h}}_{1},\ldots,\hat{\mathbf{h}}_{m_{c}}]$, whereas $\hat{\mathbf{h}}_{i}\overset{\text{d}}=\mathcal{CN}(\mathbf{0},\beta_{i}\mathbf{I}_{N})$, for $1\leq i\leq M$, with $\beta_{i}\triangleq p_{i}/(d_{i}^{\omega_{i}})$, where $p_{i}$, $d_{i}$, and $\omega_{i}$ correspond to the signal power, normalized estimated distance (with a reference distance equal to $1$km) from the receiver and path-loss exponent of the $i$th transmitter, respectively.

\subsection{Protocol Description}
\begin{figure}
\begin{subfigure}{.5\textwidth}
  \centering
  \includegraphics[width=.4\linewidth]{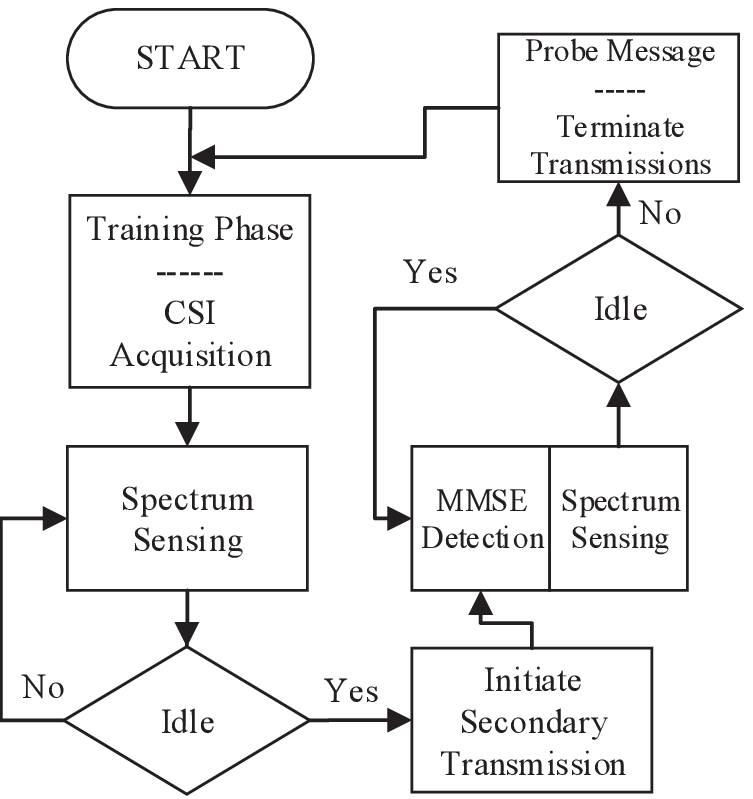}
	\caption{}
  \label{fig1}
\end{subfigure}%
\begin{subfigure}{.5\textwidth}
  \centering
  \includegraphics[width=1.0\linewidth]{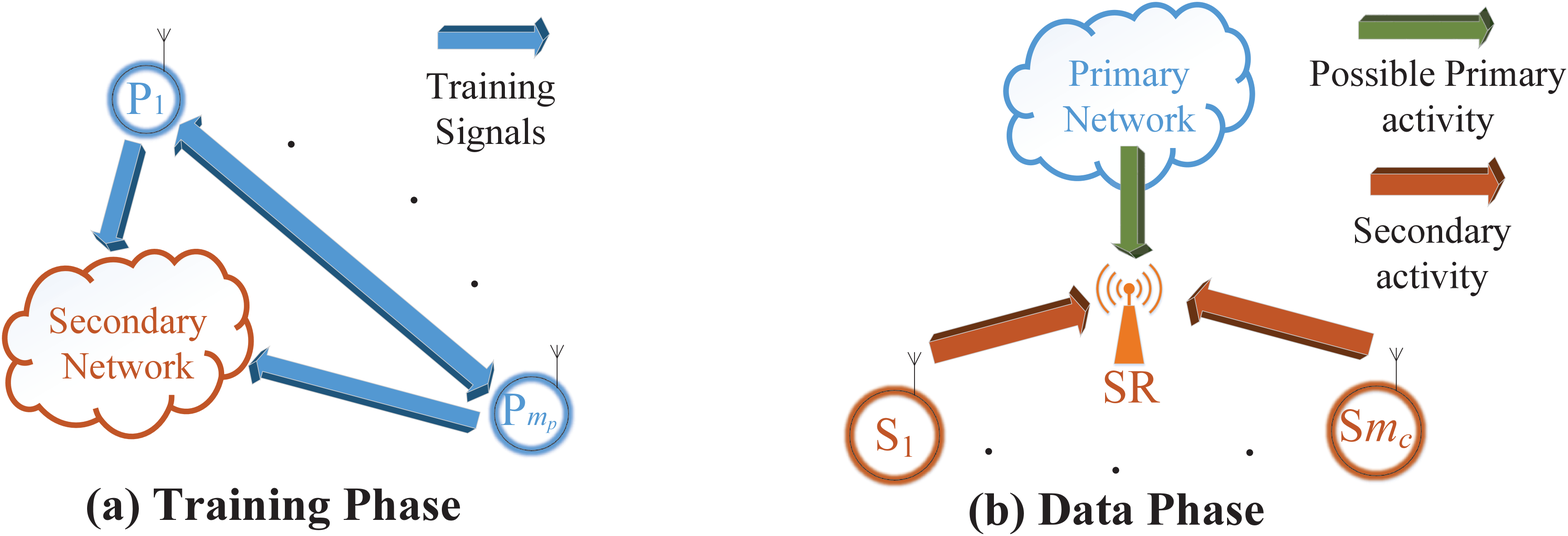}
	\caption{}
  \label{fig01}
\end{subfigure}
\caption{a) Flowchart of the proposed mode of operation at the secondary receiver; b) The considered system configuration, where S$_i$, P$_j$ and SR stand for the $i$th secondary transmitter (with $1\leq i\leq m_{c}$), $j$th primary transmitter (with $1\leq j\leq m_{p}$) and secondary receiver, respectively.}
\label{fig:fig}
\end{figure}
The mode of operation for the considered cognitive system is constituted by three main phases; namely, the \emph{training}, \emph{data transmission} and \emph{spectrum sensing} phases, which are periodically alternating.

In the training phase, all the involved nodes (i.e., primary and secondary transmitters) broadcast certain (orthogonal) pilot signals. The secondary receiver monitors the available spectrum resources in order to acquire the instantaneous channel gains from all the existing nodes (both primary and secondary). Meanwhile, all the secondary transmitters also monitor the channel in order to acquire channel gains between the primary nodes and themselves. This occurs in order to appropriately modify their power, which will use for potential transmission in the subsequent data phase. It is assumed that the channel remains constant during this phase. However, its status may change in subsequent time instances.  

Afterwards, the system enters the data phase, where the secondary nodes stay inactive for one symbol-time duration. During this time period, the secondary receiver senses the spectrum so as to capture the presence of a primary communication activity or not. In the former case, no transmission activity is performed by the secondary transmitters (lack of triggering from the secondary receiver in this case is interpreted as a busy spectrum notification to all the transmitters). This procedure is repeated in every subsequent symbol-time duration, until the receiver senses the spectrum idle. In the latter case, the receiver broadcasts a certain probe message in order to initiate the secondary transmission(s). Hence, in the next symbol-time instance, all active secondary nodes may simultaneously transmit their data streams. Upon the overall signal reception, MMSE detection is performed at the secondary receiver and all data streams are decoded concurrently.

After the removal of all secondary signals from the received signal, the spectrum sensing phase is implemented (within the same symbol-time instance), where the receiver monitors the remaining signal for the presence of a potential primary activity. If the remaining signal is sensed idle (i.e., only the presence of noise), the same procedure keeps on (i.e., data transmission-spectrum sensing), until the next training phase. If at least one primary signal is detected at the remaining signal, then the receiver immediately broadcasts another certain message in order to coarsely finalize all the secondary transmissions. An appropriate ceiling on the transmission power of the receiver is utilized in order not to cause unexpected co-channel interference to the primary communication(s). Similarly, all the active secondary transmitters use a relevant ceiling for their transmissions due to the same reason (explicit details on this ceiling are provided into the next section). The basic lines of reasoning of the proposed scheme are sketched in Figs. \ref{fig1} and \ref{fig01}. 

It is noteworthy that the motivation behind the proposed system configuration relies on certain conditions and/or limitations, which are viable in various realistic networking implementations. More specifically, conventional CR services based on television white spaces may use the traditional quiet spectrum sensing (without the training phase requirement). This is because the number of TV channels is limited (approximately $50\mathtt{\sim}70$, each with a bandwidth $6-8$MHz, within a total spectrum range between $54-862$MHz \cite{c:Cordeiro}). In this spectrum range, spectrum sensing time is indeed acceptable, whereas most IEEE $802.22$ equipments are for indoor installation and, hence, their power consumption is not an actual problem \cite{j:Ling}. Nonetheless, more sophisticated CR services, such as the IEEE $1900.4$ standard \cite{j:Buljore}, are designed to use spectrum resources from multiple radio-access-technology (RAT) heterogeneous primary networking systems, e.g., cellular systems. Consequently, spectrum range for these systems is emphatically increased (e.g., $450$MHz-$3$GHz), while the spectrum sensing time and the corresponding energy cost are extremely increased in this case, thus, becoming non-efficient. To this end, training-based signaling, which is, in principle, utilized for primary cellular configurations can be used from the cognitive/secondary system to perform spectrum sensing and/or acquire important statistics, such as channel gains and transmission powers of primary nodes/users. Besides, long-term evolution (LTE) has initiated a CR-based operation quite recently \cite{3gpp}, under the concept of licensed-assisted access/licensed-shared access (LAA/LSA), which operates with the aid of training (pilot) signaling \cite{c:Miguelez,DOCOMO,lopes2015}.      

\subsection{Training Phase: Channel Estimation}
During the training phase, $M$ orthogonal pilot sequences (i.e., unique spatial signal signatures) of length $M$ symbols are assigned to the primary and cognitive nodes.\footnote{In various network setups, primary users periodically transmit training signals intended for primary receivers to assist them in channel estimation and/or synchronization \cite[\S 12.3.1]{b:jayaweera2014signal}. Building on this feature, secondary nodes can overhear these transmissions to capture their own estimates amongst the primary nodes and themselves. The first step is to enable the training process for the secondary nodes along with the primary ones. Doing so, the secondary receiver is able to acquire CSI statistics from both networks.} Then, the received pilot signal can be expressed as
\begin{align}
\mathbf{Y}_{\text{tr}}[n]=\mathbf{H}_{\text{tr}}[n] \mathbf{\Psi}+\mathbf{W}_{\text{tr}}[n],
\label{trainingsignal}
\end{align}  
where $\mathbf{Y}_{\text{tr}}[n] \in \mathbb{C}^{N\times M}$, $\mathbf{H}_{\text{tr}}[n] \in \mathbb{C}^{N\times M}$, $\mathbf{\Psi} \in \mathbb{C}^{M\times M}$ and $\mathbf{W}_{\text{tr}}[n] \in \mathbb{C}^{N\times M}$ denote the received signal, the channel matrix, the transmitted pilot signals and AWGN, respectively, all representing the training phase. Also, the pilot signals are normalized satisfying $\mathbb{E}[\mathbf{\Psi}\mathbf{\Psi}^{\mathcal{H}}]=\mathbf{I}_{M}$.

The MMSE channel estimate of $\mathbf{h}_{i}[n]$, $1\leq i\leq M$, is given by \cite[Eq. (10)]{j:Truong_massive_Mimo} $\hat{\mathbf{h}}_{i}[n]=\beta_{i}\left(N_{0}+\sum^{M}_{j=1}\beta_{j}\right)^{-1}\mathbf{I}_{N}\left(\sum^{M}_{j=1}\mathbf{h}_{j}[n]+\mathbf{w}_{\text{tr}}[n]\right)$, where $\mathbf{w}_{\text{tr}}[n]$ is the AWGN at the $i$th channel during the training phase. It is noteworthy that with MMSE channel estimation, the channel estimate and the channel estimation error remain uncorrelated (i.e., due to the orthogonality principle \cite{b:stevenkayesttheory}). In particular, we have that
\begin{align}
\hat{\mathbf{h}}_{i}[n]=\mathbf{h}_{i}[n] +\tilde{\mathbf{h}}_{i}[n],\ \ 1\leq i\leq M,
\label{channel decomposition}
\end{align}
where $\mathbf{h}_{i}\overset{\text{d}}=\mathcal{CN}(\mathbf{0},(\beta_{i}-\hat{\beta}_{i})\mathbf{I}_{N})$ is the true channel fading of the $i$th transmitter and $\tilde{\mathbf{h}}_{i}\overset{\text{d}}=\mathcal{CN}(\mathbf{0},\hat{\beta}_{i}\mathbf{I}_{N})$ denotes its corresponding estimation error with $\hat{\beta}_{i}\triangleq \beta^{2}_{i}/(\sum^{M}_{j=1}\beta_{j}+N_{0})$ \cite[Eq. (12)]{j:Truong_massive_Mimo}.

Except the channel estimation errors, the channel aging effect occurs in several practical network setups. This is mainly because of the rapid channel variations during consecutive sample time-instances, due to, e.g., user mobility and/or severe fast fading conditions. The popular autoregressive (Jakes) model of a certain order \cite{jakes}, based on Gauss-Markov block fading channel, can accurately capture the latter effect. More specifically, it holds that
\begin{align}
\hat{\mathbf{h}}_{i}[n]=\alpha^{M}\hat{\mathbf{h}}_{i}[n-M] +\sum^{M-1}_{m=0}\alpha^{m}\mathbf{e}_{i}[n-m],
\label{channel aging}
\end{align}  
where $\alpha\triangleq J_{0}(2\pi f_{D}T_{s})$ with $f_{D}$ and $T_{s}$ denoting the maximum Doppler shift and the symbol sampling period, respectively. Moreover, $\mathbf{e}'_{i}\triangleq \sum^{M-1}_{m=0}\alpha^{m}\mathbf{e}_{i}[n-m]$ stands for the stationary Gaussian channel error vector due to the time variation of the channel, which is uncorrelated with $\mathbf{h}_{i}[n-M]$, while $\mathbf{e}'_{i}\overset{\text{d}}=\mathcal{CN}(\mathbf{0},(1-\alpha^{2 M})\beta_{i}\mathbf{I}_{N})$. For the sake of mathematical simplicity and without loss of generality, we assume that the channel remains unchanged over the time period of training phase, while it may change during the subsequent data transmission phase. Thus, adopting the autoregressive model of order one, (\ref{channel aging}) simplifies to 
\begin{align}
\hat{\mathbf{h}}_{i}[n]=\alpha\hat{\mathbf{h}}_{i}[n-1] +\mathbf{e}_{i}[n].
\label{channel aging1}
\end{align} 
Substituting (\ref{channel decomposition}) into (\ref{channel aging1}), we have that\footnote{In what follows, the time-instance index $n$ is dropped for ease of presentation, since all the involved random vectors are mutually independent.}
\begin{align}
\hat{\mathbf{h}}_{i}=\alpha \mathbf{h}_{i}+\alpha \tilde{\mathbf{h}}_{i}+\mathbf{e}_{i}\triangleq \mathbf{g}_{i}+\boldsymbol{\epsilon}_{i},
\label{channeljoint}
\end{align} 
where $\mathbf{g}_{i}\overset{\text{d}}=\mathcal{CN}(\mathbf{0},(\beta_{i}-\hat{\beta}_{i})\alpha^{2}\mathbf{I}_{N})$ and $\boldsymbol{\epsilon}_{i}\overset{\text{d}}=\mathcal{CN}(\mathbf{0},\alpha^{2}\hat{\beta}_{i}+(1-\alpha^{2})\beta_{i})\mathbf{I}_{N})$.

It should be noted that the latter model in (\ref{channeljoint}) combines both the channel aging effect and the channel estimation error. Hence, by defining $\mathbf{G}\triangleq [\mathbf{g}_{1},\ldots,\mathbf{g}_{m_{p}},\mathbf{g}_{1},\ldots,\mathbf{g}_{m_{c}}]$ and $\mathbf{E}\triangleq [\boldsymbol{\epsilon}_{1},\ldots,\boldsymbol{\epsilon}_{m_{p}},\boldsymbol{\epsilon}_{1},\ldots,\boldsymbol{\epsilon}_{m_{c}}]$, (\ref{eq1}) can be reformulated as
\begin{align}
\mathbf{y}=\mathbf{G}\mathbf{s}+\mathbf{E}\mathbf{s}+\mathbf{w}.
\label{eq1new}
\end{align} 

\subsection{Data Transmission Phase: Signal Detection}
Benefiting from the training phase whereby estimating the channel gains of all the signals, the cognitive receiver proceeds with the detection/decoding of the simultaneously transmitted streams from the $m_{c}$ cognitive nodes. The mean-squared error (MSE) of the $i$th received stream ($1\leq i\leq m_{c}$) is formed as
\begin{align}
\text{MSE}_{i}=\mathbb{E}\left[\left|s_{i}-\boldsymbol{\phi}_{i}^{\mathcal{H}}\mathbf{y}\right|^{2}\right],
\label{mse}
\end{align} 
where $\boldsymbol{\phi}_{i}$ is the optimal weight vector.

\begin{cor}
The optimal weight vector, which minimizes MSE of the $i$th received stream is given by
\begin{align}
\boldsymbol{\phi}_{i}=\sqrt{\beta_{i}}\left(\mathbf{C}\diag\{\beta_{j}\}^{M}_{j=1}\mathbf{C}^{\mathcal{H}}+N_{0}\mathbf{I}_{N}\right)^{-1}\mathbf{c}_{i},
\label{phi}
\end{align} 
where $\mathbf{C} \in \mathbb{C}^{N\times M}$ and $\mathbf{C}\overset{\text{d}}=\mathcal{CN}(\mathbf{0},\mathbf{I}_{N})$, while $\mathbf{c}_{i}$ is its $i$th column vector.
\end{cor}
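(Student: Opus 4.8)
The plan is to recognize \eqref{mse} as the classical linear (Wiener) MMSE filtering problem, to solve it by the orthogonality principle, and then to rewrite the optimal filter in the normalized form displayed in \eqref{phi}. Throughout, $\boldsymbol{\phi}_{i}$ is permitted to depend on the channel realization, so the expectation in \eqref{mse} is taken over the data vector $\mathbf{s}$ and the noise $\mathbf{w}$ with $\mathbf{C}$ (equivalently $\hat{\mathbf{H}}$) held fixed; this is consistent with the fact that the right-hand side of \eqref{phi} is itself a function of $\mathbf{C}$.

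First I would expand the quadratic cost as
\begin{align}
\text{MSE}_{i}=\mathbb{E}\!\left[|s_{i}|^{2}\right]-\boldsymbol{\phi}_{i}^{\mathcal{H}}\mathbb{E}\!\left[\mathbf{y}s_{i}^{*}\right]-\mathbb{E}\!\left[s_{i}\mathbf{y}^{\mathcal{H}}\right]\boldsymbol{\phi}_{i}+\boldsymbol{\phi}_{i}^{\mathcal{H}}\mathbb{E}\!\left[\mathbf{y}\mathbf{y}^{\mathcal{H}}\right]\boldsymbol{\phi}_{i},
\end{align}
and evaluate the three moments from \eqref{eq1new} using $\mathbb{E}[\mathbf{s}\mathbf{s}^{\mathcal{H}}]=\mathbf{I}_{M}$ together with the mutual independence of $\mathbf{s}$, $\mathbf{w}$ and the error vectors: $\mathbb{E}[|s_{i}|^{2}]=1$; $\mathbb{E}[\mathbf{y}s_{i}^{*}]=\hat{\mathbf{H}}\,\mathbb{E}[\mathbf{s}s_{i}^{*}]=\hat{\mathbf{h}}_{i}$, i.e.\ the column of $\hat{\mathbf{H}}$ carrying the $i$th secondary stream; and $\mathbb{E}[\mathbf{y}\mathbf{y}^{\mathcal{H}}]=\hat{\mathbf{H}}\hat{\mathbf{H}}^{\mathcal{H}}+N_{0}\mathbf{I}_{N}$. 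The point deserving the most care is this last identity: the residual term $\mathbf{E}\mathbf{s}$ produced by outdated and imperfect CSI is folded into the channel that the receiver actually uses for detection, namely $\hat{\mathbf{H}}=\mathbf{G}+\mathbf{E}$ (cf.\ \eqref{channeljoint}), so $\mathbf{y}=\hat{\mathbf{H}}\mathbf{s}+\mathbf{w}$ with $\hat{\mathbf{H}}$ deterministic once conditioned, and no separate error-covariance term enters $\mathbb{E}[\mathbf{y}\mathbf{y}^{\mathcal{H}}]$ at the filter-design stage; the performance penalty caused by this mismatch is precisely what is quantified later in Section~\ref{Statistics of SINR}.

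Since $N_{0}>0$, the Hessian $\mathbb{E}[\mathbf{y}\mathbf{y}^{\mathcal{H}}]=\hat{\mathbf{H}}\hat{\mathbf{H}}^{\mathcal{H}}+N_{0}\mathbf{I}_{N}$ is positive definite, hence invertible, so $\text{MSE}_{i}$ is strictly convex in $\boldsymbol{\phi}_{i}$ and has a unique minimizer. I would obtain it by setting the Wirtinger gradient $\partial\,\text{MSE}_{i}/\partial\boldsymbol{\phi}_{i}^{*}=\mathbb{E}[\mathbf{y}\mathbf{y}^{\mathcal{H}}]\boldsymbol{\phi}_{i}-\mathbb{E}[\mathbf{y}s_{i}^{*}]$ to zero — or, equivalently, by invoking the orthogonality principle $\mathbb{E}[(s_{i}-\boldsymbol{\phi}_{i}^{\mathcal{H}}\mathbf{y})\mathbf{y}^{\mathcal{H}}]=\mathbf{0}$ — which in either case gives $\boldsymbol{\phi}_{i}=(\hat{\mathbf{H}}\hat{\mathbf{H}}^{\mathcal{H}}+N_{0}\mathbf{I}_{N})^{-1}\hat{\mathbf{h}}_{i}$.

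It remains to pass to the normalized channel. Because $\hat{\mathbf{h}}_{j}\overset{\text{d}}=\mathcal{CN}(\mathbf{0},\beta_{j}\mathbf{I}_{N})$ for every $j$, I write $\hat{\mathbf{h}}_{j}=\sqrt{\beta_{j}}\,\mathbf{c}_{j}$ with $\mathbf{c}_{j}\overset{\text{d}}=\mathcal{CN}(\mathbf{0},\mathbf{I}_{N})$, that is $\hat{\mathbf{H}}=\mathbf{C}\,\diag\{\sqrt{\beta_{j}}\}_{j=1}^{M}$, whence $\hat{\mathbf{H}}\hat{\mathbf{H}}^{\mathcal{H}}=\mathbf{C}\,\diag\{\beta_{j}\}_{j=1}^{M}\,\mathbf{C}^{\mathcal{H}}$ and $\hat{\mathbf{h}}_{i}=\sqrt{\beta_{i}}\,\mathbf{c}_{i}$. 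Substituting these two relations into the filter obtained above yields precisely \eqref{phi}. Apart from the CSI-mismatch subtlety noted above and the minor bookkeeping of matching the $i$th secondary stream to its column of $\hat{\mathbf{H}}$ under the paper's overloaded indexing, the argument is the textbook Wiener-filter computation and presents no real obstacle.
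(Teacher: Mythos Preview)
Your proposal is correct and follows essentially the same route as the paper: expand the MSE, identify $\mathbb{E}[\mathbf{y}\mathbf{y}^{\mathcal{H}}]=\hat{\mathbf{H}}\hat{\mathbf{H}}^{\mathcal{H}}+N_{0}\mathbf{I}_{N}$ and $\mathbb{E}[\mathbf{y}s_{i}^{*}]=\hat{\mathbf{h}}_{i}$, solve for the Wiener filter, and then substitute $\hat{\mathbf{H}}=\mathbf{C}\diag\{\sqrt{\beta_{j}}\}_{j=1}^{M}$. The only cosmetic difference is that the paper completes the square to read off the minimizer whereas you set the Wirtinger gradient to zero (equivalently, invoke orthogonality), which is an entirely equivalent standard step.
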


\begin{proof}
The proof of (\ref{phi}) is relegated in Appendix \ref{appa}.
\end{proof}

At the receiver, $\boldsymbol{\phi}^{\mathcal{H}}_{i}\mathbf{y}$ is utilized for the detection of the $i$th transmitted stream, yielding
\begin{align}
\nonumber
z_{i}&=\boldsymbol{\phi}^{\mathcal{H}}_{i}\mathbf{y}=\boldsymbol{\phi}^{\mathcal{H}}_{i}\mathbf{g}_{i}s_{i}+\sum_{j\neq i}\boldsymbol{\phi}^{\mathcal{H}}_{i}\mathbf{g}_{j}s_{j}+\boldsymbol{\phi}^{\mathcal{H}}_{i}\mathbf{E}\mathbf{s}+\boldsymbol{\phi}^{\mathcal{H}}_{i}\mathbf{w}\\
&=\underbrace{\left(\mathbf{A}^{-1}\mathbf{g}_{i}\right)^{\mathcal{H}}\mathbf{g}_{i}}_{\triangleq \mathcal{P}_{i}}s_{i}+\underbrace{\left(\mathbf{A}^{-1}\boldsymbol{\epsilon}_{i}\right)^{\mathcal{H}}\mathbf{g}_{i}s_{i}+\sum_{j\neq i}\boldsymbol{\phi}^{\mathcal{H}}_{i}\mathbf{g}_{j}s_{j}+\boldsymbol{\phi}^{\mathcal{H}}_{i}\mathbf{E}\mathbf{s}+\boldsymbol{\phi}^{\mathcal{H}}_{i}\mathbf{w}}_{\triangleq \mathcal{R}_{i}},
\label{zsignal}
\end{align}
where $\mathbf{A}\triangleq \mathbf{C}\diag\{\beta_{j}\}^{M}_{j=1}\mathbf{C}^{\mathcal{H}}+N_{0}\mathbf{I}_{N}$.

\subsection{Spectrum Sensing}
ED is the optimum detection method, since channel gains, signal, and noise variances are all known (or estimated) \cite{j:Tandra}. In addition, the use of multiple antennas at the secondary receiver can overcome the estimation uncertainty and improve the performance of spectrum sensing, by exploiting many available observations in the spatial domain \cite{ietbayesian}. Let the remaining signal, after decoding the $m_{c}$ secondary signals (thus, after removing their impact from the remaining signal), be defined as $\mathbf{r}$. Then, (\ref{eq1new}) becomes
\begin{align}
\mathbf{r}=\mathbf{G}_{p}\mathbf{s}_{p}+\mathbf{E}_{p}\mathbf{s}_{p}+\mathbf{w}=\mathbf{C}_{p}\diag\{\sqrt{\beta_{i}}\}^{m_{p}}_{i=1}\mathbf{s}_{p}+\mathbf{w},
\label{remsignal}
\end{align}
where $\mathbf{r} \in \mathbb{C}^{N\times 1}$, $\mathbf{G}_{p} \in \mathbb{C}^{N\times m_{p}}$, $\mathbf{E}_{p} \in \mathbb{C}^{N\times m_{p}}$, $\mathbf{C}_{p} \in \mathbb{C}^{N\times m_{p}}$ and $\mathbf{s}_{p} \in \mathbb{C}^{m_{p}\times 1}$ denote the remaining received signal, the true channel matrix, the estimation error matrix, the equivalent (joint) channel matrix and the transmitted signal from the primary nodes, respectively. Also, $\mathbf{C}_{p}\overset{\text{d}}=\mathcal{CN}(\mathbf{0},\mathbf{I}_{N})$.

{\color{black}In practice, perfect removal of the $m_{c}$ secondary signals (after decoding) may not always be the case due to, e.g., hardware constraints and/or impairments at the secondary receiver. Hence, this process may cause residual noise onto the remaining signal prior to spectrum sensing. In this case, (\ref{remsignal}) becomes
\begin{align}
\mathbf{r}=\mathbf{C}_{p}\diag\{\sqrt{\beta_{i}}\}^{m_{p}}_{i=1}\mathbf{s}_{p}+\mathbf{w}',
\label{remsignal11}
\end{align}
where $\mathbf{w}'\triangleq \mathbf{w}+\mathbf{w_{\epsilon}}$ with $\mathbf{w_{\epsilon}}$ being the additive post-noise after the aforementioned imperfect cancellation/removal. Assuming that $\mathbf{w_{\epsilon}}$ is a zero-mean Gaussian distributed vector \cite{j:UmarSheikh2014,j:Tandra}, we can model the post-noise as $\mathbf{w_{\epsilon}} \in \mathbb{R}^{N \times 1}$ while $\mathbf{w_{\epsilon}}\overset{\text{d}}=\mathcal{N}(\mathbf{0},\sigma^{2}_{\epsilon}\mathbf{I}_{N})$, where $\sigma^{2}_{\epsilon}$ denotes the level of impact due to imperfect cancellation of the secondary signals. Typically, the value of $\sigma^{2}_{\epsilon}$ can be captured by the secondary receiver via measurements during operation \cite{c:Kalamkar2013} and/or is predetermined from the system manufacturer. With known $\sigma^{2}_{\epsilon}$, the total noise $\mathbf{w}'$ is modeled as $\mathbf{w}'\overset{\text{d}}=\mathcal{CN}(\mathbf{0},\hat{N_{0}}\mathbf{I}_{N})$ with $\hat{N_{0}}\triangleq N_{0}+\sigma^{2}_{\epsilon}$.\footnote{In what follows, for ease of presentation and without loss of generality, $\hat{N_{0}}=N_{0}$ is assumed (which implies that $\mathbf{w}'=\mathbf{w}$). On the other hand, when the variance of post-noise $\mathbf{w_{\epsilon}}$ is not available, exact closed formulations for the detection and false alarm probabilities are not feasible; yet, current ones (presented in the following section) can serve as a performance benchmark or upper performance bounds.}}

Thereby, the binary hypothesis test is formed as
\begin{align}
T_{\text{ED}}\triangleq \sum^{L-1}_{l=0}\left\|\mathbf{r}(l)\right\|^{2}\overset{\mathcal{H}_{1}}{\underset{\mathcal{H}_{0}}{\lessgtr}} \lambda,
\label{ted}
\end{align}
where $L$ and $\lambda$ denote the number of samples for the received signal and the energy threshold, respectively. Moreover, the two hypotheses $\mathcal{H}_{0}$ and $\mathcal{H}_{1}$ correspond to the cases of no primary signal transmission and \emph{at least} one primary signal transmission, respectively. They are explicitly defined by the structure of the received signal's covariance matrix as  
\begin{align}
\begin{array}{l l l}     
    \mathcal{H}_{0}: &\mathbb{E}[\mathbf{r}\mathbf{r}^{\mathcal{H}}]=N_{0}\mathbf{I}_{N},\text{ no signal is present} & \\
    \mathcal{H}_{1}: &\mathbb{E}[\mathbf{r}\mathbf{r}^{\mathcal{H}}]=\text{ any positive semi-definite matrix.} &
\end{array}
\label{hypo}
\end{align}

\section{Statistics of SINR}
\label{Statistics of SINR}
We commence by defining the SINR for each stream with its corresponding CDF with respect to the cognitive communication performance, followed by the false alarm and detection probabilities with respect to the spectrum sensing performance. Then, the unconditional outage probability of the considered system is formulated in a closed form. Then, other important system measures are also obtained in closed form, namely, AUC, the transmission power of the secondary nodes, and the probability of unexpected co-channel interference at the primary nodes. 

Notice from (\ref{eq1new}) that $(\mathbf{g}_{i}+\boldsymbol{\epsilon}_{i})\overset{\text{d}}=\sqrt{\beta_{i}}\mathbf{c}_{i}$ and, thus, it is straightforward to show that $\boldsymbol{\epsilon}_{i}\overset{\text{d}}=\mathcal{CN}(\mathbf{0},\alpha^{2}\hat{\beta}_{i}+(1-\alpha^{2})\beta_{i}\mathbf{I}_{N})=(\sqrt{\alpha^{2}\hat{\beta}_{i}+(1-\alpha^{2})\beta_{i}})\mathbf{c}_{i}$. Hence, it follows that
\begin{align}
\mathbf{A}^{-1}\boldsymbol{\epsilon}_{i}=\mathbf{A}^{-1}\left(\sqrt{\frac{\alpha^{2}\hat{\beta}_{i}+(1-\alpha^{2})\beta_{i}}{\beta_{i}}}\right)\boldsymbol{\phi}_{i},
\label{Pidistr}
\end{align}
while based on (\ref{Pidistr}), we have from (\ref{zsignal}) that
\begin{align}
\mathcal{P}_{i}\overset{\text{d}}=\left((\beta_{i}-\hat{\beta}_{i})\alpha^{2}\right)\left(\mathbf{A}^{-1}\mathbf{c}_{i}\right)^{\mathcal{H}}\mathbf{c}_{i}.
\label{Pidistr1}
\end{align}
Using the above methodology, it also holds from (\ref{zsignal}) that
\begin{align}
\mathcal{R}_{i}=\boldsymbol{\phi}^{\mathcal{H}}_{i}\mathbf{C}\mathbf{I}'^{(i)}_{M}\mathbf{s}+\boldsymbol{\phi}^{\mathcal{H}}_{i}\mathbf{w},
\label{Ridistr}
\end{align}
where $\mathbf{I}'^{(i)}_{M}$ is a special diagonal matrix, which is formed as
\begin{align}
\mathbf{I}'^{(i)}_{M}\triangleq \left\{
\begin{array}{cr}     
    \diag\left\{\sqrt{\beta_{j}}\right\}^{M}_{j=1,j\neq i}\\
    & \\
    \sqrt{\alpha^{2}\hat{\beta}_{i}+(1-\alpha^{2})\beta_{i}},\text{ for the $i$th position}.
\end{array}\right.
\label{Ispecial}
\end{align}
Thereby, since $\mathbb{E}[\mathcal{R}_{i}\mathcal{R}^{\mathcal{H}}_{i}]=\boldsymbol{\phi}^{\mathcal{H}}_{i}(\mathbf{C}(\mathbf{I}'^{(i)}_{M})^{2}\mathbf{C}^{\mathcal{H}}+N_{0}\mathbf{I}_{N})\boldsymbol{\phi}_{i}$, the SINR of the $i$th transmitted stream reads as
\begin{align}
\nonumber
&\text{SINR}_{i}=\frac{\mathcal{P}^{2}_{i}}{\mathbb{E}[\mathcal{R}_{i}\mathcal{R}^{\mathcal{H}}_{i}]}=\frac{\left((\beta_{i}-\hat{\beta}_{i})\alpha^{2}\left(\mathbf{A}^{-1}\mathbf{c}_{i}\right)^{\mathcal{H}}\mathbf{c}_{i}\right)^{2}}{\boldsymbol{\phi}^{\mathcal{H}}_{i}(\mathbf{C}(\mathbf{I}'^{(i)}_{M})^{2}\mathbf{C}^{\mathcal{H}}+N_{0}\mathbf{I}_{N})\boldsymbol{\phi}_{i}}=\frac{\left((\beta_{i}-\hat{\beta}_{i})\alpha^{2}\left(\mathbf{A}^{-1}\mathbf{c}_{i}\right)^{\mathcal{H}}\mathbf{c}_{i}\right)^{2}}{\beta_{i}\left(\mathbf{A}^{-1}\mathbf{c}_{i}\right)^{\mathcal{H}}_{i}(\mathbf{C}(\mathbf{I}'^{(i)}_{M})^{2}\mathbf{C}^{\mathcal{H}}+N_{0}\mathbf{I}_{N})\left(\mathbf{A}^{-1}\mathbf{c}_{i}\right)}\\
&\approx \frac{\left((\beta_{i}-\hat{\beta}_{i})\alpha^{2}\left(\mathbf{A}^{-1}\mathbf{c}_{i}\right)^{\mathcal{H}}\mathbf{c}_{i}\right)^{2}}{\beta_{i}\left(\mathbf{A}^{-1}\mathbf{c}_{i}\right)^{\mathcal{H}}_{i}\mathbf{c}_{i}}=\left(\frac{\left((\beta_{i}-\hat{\beta}_{i})\alpha^{2}\right)^{2}}{\beta_{i}}\right)\mathbf{c}^{\mathcal{H}}_{i}\left(\mathbf{C}\diag\{\beta_{j}\}^{M}_{j=1}\mathbf{C}^{\mathcal{H}}+N_{0}\mathbf{I}_{N}\right)^{-1}\mathbf{c}_{i}.
\label{sinr}
\end{align} 
The approximation stage in the latter expression is formed by assuming that $(\mathbf{C}(\mathbf{I}'^{(i)}_{M})^{2}\mathbf{C}^{\mathcal{H}}+N_{0}\mathbf{I}_{N})\approx \mathbf{C}\diag\{\beta_{j}\}^{M}_{j=1}\mathbf{C}^{\mathcal{H}}+N_{0}\mathbf{I}_{N}=\mathbf{A}$. It becomes exact in the case when perfect CSI conditions occur.\footnote{As indicated from the numerical results provided in Section \ref{Numerical Results}, the approximation error remains negligible in moderate channel estimation error conditions.}

Based on Woodbury's identity \cite[Eq. (2.1.5)]{b:woodbury}, (\ref{sinr}) can alternatively be expressed as
\begin{align}
\text{SINR}_{i}=\left(\frac{\left((\beta_{i}-\hat{\beta}_{i})\alpha^{2}\right)^{2}}{\beta_{i}}\right)\frac{\Phi_{i}}{1+\Phi_{i}},
\label{sinrr}
\end{align} 
where $\Phi_{i}\triangleq \mathbf{c}^{\mathcal{H}}_{i}\left(\mathbf{K}\diag\{\beta_{j}\}^{M}_{\begin{subarray}{c}j=1\\j\neq i\end{subarray}}\mathbf{K}^{\mathcal{H}}+N_{0}\mathbf{I}_{N}\right)^{-1}\mathbf{c}_{i}$, while $\mathbf{K}\triangleq \sum_{j\neq i}\mathbf{c}_{j}$. The form of (\ref{sinrr}) is preferable than (\ref{sinr}) for further analysis, because $\mathbf{c}_{i}$ and $\mathbf{K}$ are statistically independent.

\begin{lem}
The CDF of SINR for the $i$th transmitted stream for a system with $M$ simultaneous transmitting nodes and $N$ receive antennas, while $1\leq i\leq m_{c}$, is presented in a closed form as
\begin{align}
F^{(N\times M)}_{\text{SINR}_{i}}(x)=F^{(N\times M)}_{\Phi_{i}}\left(\frac{x}{\frac{\left((\beta_{i}-\hat{\beta}_{i})\alpha^{2}\right)^{2}}{\beta_{i}}-x}\right),
\label{cdfsinr}
\end{align} 
where 
\begin{align}
\nonumber
F^{(N\times M)}_{\Phi_{i}}(y)=&1-\exp\left(-\frac{N_{0}}{\beta_{i}}y\right)\Bigg[\sum^{N}_{i=1}\frac{(\frac{N_{0}}{\beta_{i}}y)^{i-1}}{(i-1)!}-\sum^{N}_{i=\max\{1,N-M+1\}}\frac{\left(\frac{N_{0}}{\beta_{i}}y\right)^{i-1}}{(i-1)!}\\
&\times \frac{\displaystyle \sum^{M}_{\begin{subarray}{l}j=N-i+1\\j\neq i\end{subarray}}\:\:\sum_{1\leq n_{1}<\cdots<n_{j}\leq M}\textstyle \left(\frac{\beta_{n_{1}}}{\beta_{i}}\frac{\beta_{n_{2}}}{\beta_{i}}\cdots \frac{\beta_{n_{j}}}{\beta_{i}}\right)y^{j}}{\displaystyle \prod^{M}_{\begin{subarray}{c}n=1\\n\neq i\end{subarray}}\textstyle \left(1+\frac{\beta_{n}}{\beta_{i}}y\right)}\Bigg].
\label{phicdf}
\end{align} 
When $\beta_{1}=\beta_{2}=\cdots=\beta_{M}\triangleq \beta$, (\ref{phicdf}) reduces to
\begin{align}
&F^{(N\times M)}_{\Phi_{i}}(y)=1-\exp\left(-\frac{N_{0}}{\beta}y\right)\Bigg[\sum^{N}_{i=1}\frac{(\frac{N_{0}}{\beta}y)^{i-1}}{(i-1)!}-\sum^{N}_{i=\max\{1,N-M+2\}}\frac{\left(\frac{N_{0}}{\beta}y\right)^{i-1}\displaystyle \sum^{M-1}_{j=N-i+1}\textstyle \binom{M-1}{j}y^{j}}{(i-1)!(1+y)^{M-1}}\Bigg].
\label{phicdfiid}
\end{align}
\end{lem}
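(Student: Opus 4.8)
The plan is to reduce the statement to a known distributional identity for MMSE-type quadratic forms. First I would observe that, by (\ref{sinrr}), $\mathrm{SINR}_i$ is a strictly increasing function of $\Phi_i$ on the relevant range, so $F^{(N\times M)}_{\mathrm{SINR}_i}(x)=\mathrm{Pr}[\Phi_i\leq \Phi_i^{-1}(x)]$, where inverting $\mathrm{SINR}_i=\frac{((\beta_i-\hat\beta_i)\alpha^2)^2}{\beta_i}\frac{\Phi_i}{1+\Phi_i}$ for $\Phi_i$ gives precisely the argument displayed in (\ref{cdfsinr}). This is a one-line change of variables and is not the hard part; it only requires noting that the prefactor is positive and that the map $\Phi\mapsto\Phi/(1+\Phi)$ is a bijection from $[0,\infty)$ to $[0,1)$, so the stated CDF is valid for $x\in[0,\frac{((\beta_i-\hat\beta_i)\alpha^2)^2}{\beta_i})$.

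The substance is to derive (\ref{phicdf}), the CDF of $\Phi_i=\mathbf{c}_i^{\mathcal H}\big(\mathbf{K}\,\mathrm{diag}\{\beta_j\}_{j\neq i}\,\mathbf{K}^{\mathcal H}+N_0\mathbf{I}_N\big)^{-1}\mathbf{c}_i$, where $\mathbf{c}_i\sim\mathcal{CN}(\mathbf 0,\mathbf I_N)$ is independent of $\mathbf{K}=\sum_{j\neq i}\mathbf{c}_j$. Conditioning on the $(M-1)$ independent vectors $\{\mathbf{c}_j\}_{j\neq i}$, the matrix $\mathbf{W}\triangleq\mathbf{K}\,\mathrm{diag}\{\beta_j\}_{j\neq i}\,\mathbf{K}^{\mathcal H}+N_0\mathbf I_N$ is fixed, positive definite and Hermitian; since $\mathbf{c}_i$ is isotropic Gaussian, $\mathbf{c}_i^{\mathcal H}\mathbf{W}^{-1}\mathbf{c}_i$ is, conditionally, a weighted sum of independent exponential random variables with weights equal to the eigenvalues of $\mathbf{W}^{-1}$, i.e.\ the reciprocals of the eigenvalues of $\mathbf{W}$. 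Hence $\mathrm{Pr}[\Phi_i>y\mid\mathbf{W}]=\sum_{k}e^{-\mu_k y}\prod_{\ell\neq k}\frac{\mu_k}{\mu_k-\mu_\ell}$ in terms of the eigenvalues $\{\mu_k\}$ of $\mathbf{W}$ (with the usual confluent-limit form when eigenvalues coincide). The remaining, and main, task is to take the expectation over the spectral distribution of $\mathbf{W}$; I would do this by passing to the characteristic polynomial of $\mathbf{W}$, writing the exceedance probability as a contour/residue expression so that $\det(\mathbf{W})$ and $\det(s\mathbf I_N-\mathbf{W})$ appear, and then using that $\mathbf{K}\mathbf{K}^{\mathcal H}$ (and more generally $\mathbf{K}\,\mathrm{diag}\{\beta_j\}\mathbf{K}^{\mathcal H}$) is a rank-$\min\{N,M-1\}$ Wishart-type matrix whose nonzero eigenvalues have a tractable joint law. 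Concretely, $\det(s\mathbf I_N+\mathbf{K}\,\mathrm{diag}\{\beta_j\}\mathbf{K}^{\mathcal H})$ expands, via the Cauchy–Binet formula, into a sum over subsets $\{n_1<\cdots<n_j\}\subseteq\{1,\dots,M\}\setminus\{i\}$ of products $\beta_{n_1}\cdots\beta_{n_j}$ times minors of $\mathbf{K}^{\mathcal H}\mathbf{K}$; taking expectations of those minors against the (complex) Wishart density produces the elementary-symmetric-in-$\beta_n/\beta_i$ numerator and the $\prod_{n\neq i}(1+\frac{\beta_n}{\beta_i}y)$ denominator, with the finite sum $\sum_{i=1}^N\frac{(N_0 y/\beta_i)^{i-1}}{(i-1)!}$ coming from the $N_0\mathbf I_N$ shift (the $\mathcal{H}$-independent "noise-only" contribution) and the index cutoff $\max\{1,N-M+1\}$ reflecting that $\mathbf{K}\,\mathrm{diag}\{\beta_j\}\mathbf{K}^{\mathcal H}$ has at most $M-1$ nonzero eigenvalues, so ranks below $N-M+1$ are forced.

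The main obstacle is this last averaging step: keeping the bookkeeping of subset sums, Wishart minor expectations, and the residue/partial-fraction manipulation consistent, so that the i.n.i.d.\ $\beta_j$'s survive as elementary symmetric functions rather than collapsing. I would organize it as (i) conditional CDF of the quadratic form given $\mathbf{W}$; (ii) an integral representation of that CDF in terms of $\det(\mathbf{W})$ and the resolvent $\det(s\mathbf I_N+\mathbf{W}-N_0\mathbf I_N)$; (iii) Cauchy–Binet expansion of that determinant and term-by-term Wishart expectation; (iv) recombining into (\ref{phicdf}). Finally, the i.i.d.\ specialization (\ref{phicdfiid}) follows by setting $\beta_n=\beta$ for all $n$: the double subset sum $\sum_j\sum_{n_1<\cdots<n_j}1$ collapses to $\sum_j\binom{M-1}{j}y^j$, the denominator becomes $(1+y)^{M-1}$, and the lower summation index shifts from $N-M+1$ to $N-M+2$ because the repeated-eigenvalue (confluent) structure removes one degree of freedom; I would verify this either by a direct limiting argument in (\ref{phicdf}) or, more cleanly, by redoing step (iii) with $\mathbf{K}\mathbf{K}^{\mathcal H}$ a genuine central complex Wishart matrix, whose eigenvalue density is classical, and reading off the binomial coefficients from $\det(s\mathbf I_N+\mathbf{K}\mathbf{K}^{\mathcal H})=\prod(s+\nu_k)$ expanded in elementary symmetric polynomials of the $\nu_k$'s.
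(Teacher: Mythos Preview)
Your reduction of $F_{\mathrm{SINR}_i}$ to $F_{\Phi_i}$ via the monotone change of variables in (\ref{sinrr}) is exactly what the paper does, and your remark on the valid range of $x$ is correct.

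Where you diverge is in the treatment of $F_{\Phi_i}$. The paper does \emph{not} derive (\ref{phicdf}) and (\ref{phicdfiid}) from scratch: it simply observes that $\Phi_i$ is precisely the MMSE output SINR of an $N\times M$ system with i.n.i.d.\ Rayleigh columns and invokes the closed-form CDF already established in the literature (specifically \cite[Eq.~(11)]{j:GaoMMSE}, then recast via \cite[Eqs.~(65),~(70)]{j:mmsesuraweera} into the combined form displayed). So the paper's ``proof'' of the lemma is essentially a citation plus the one-line inversion you already have.

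Your proposal instead re-derives those cited results ab initio via conditioning on $\mathbf{W}$, a residue/partial-fraction representation of the conditional exceedance probability, and a Cauchy--Binet expansion of $\det(s\mathbf{I}_N+\mathbf{K}\,\mathrm{diag}\{\beta_j\}\mathbf{K}^{\mathcal H})$ followed by Wishart-minor expectations. That programme is sound in outline and is, in fact, close to how such MMSE-SINR CDFs were originally obtained; the elementary symmetric functions in the $\beta_n/\beta_i$ and the product $\prod_{n\neq i}(1+\tfrac{\beta_n}{\beta_i}y)$ do emerge exactly from the Cauchy--Binet/subset structure you describe, and the exponential-times-truncated-series prefactor is the $N_0\mathbf{I}_N$ contribution. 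What your route buys is self-containment; what the paper's route buys is brevity, since the heavy random-matrix computation has already been done elsewhere. One caution: your explanation for the index shift $\max\{1,N-M+1\}\to\max\{1,N-M+2\}$ in the i.i.d.\ case (``confluent structure removes one degree of freedom'') is not quite the right mechanism---the shift comes directly from the fact that there are $M-1$ interferers, so the subset sums run to $M-1$ rather than $M$, and the rank constraint tightens accordingly; this drops out immediately if you redo the i.i.d.\ calculation cleanly rather than taking a limit in (\ref{phicdf}). Also note that $\mathbf{K}$ must be read as the $N\times(M-1)$ matrix with columns $\{\mathbf{c}_j\}_{j\neq i}$, not the literal sum $\sum_{j\neq i}\mathbf{c}_j$ (which is a vector); you repeated the paper's notational slip there, but your subsequent use of $\mathbf{K}\,\mathrm{diag}\{\beta_j\}\mathbf{K}^{\mathcal H}$ makes clear you intend the matrix.
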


\begin{proof}
The proof is provided in Appendix \ref{appb}.
\end{proof}
The CDF in (\ref{phicdfiid}), implies identical channel fading conditions for all the nodes (i.e., equal distances with regards to the receiver), which is a rather infeasible scenario. Nonetheless, it can be used as a performance benchmark and/or a good approximation when $\beta_{1}\approx \beta_{2}\approx \cdots \approx \beta_{M}$.

\section{Performance Metrics}
\label{Performance Metrics}

\subsection{Detection Probability}
It suffices to show that in the case of $\mathcal{H}_{1}$ hypothesis, even if only the weakest signal is present, $T_{\text{ED}}>\lambda$ should hold. The latter condition can be modeled as
\begin{align}
\mathbf{r}_{\min}=\sqrt{\beta_{\min}}\mathbf{c}_{\min}s_{\min}+\mathbf{w},
\label{remsignalmin}
\end{align}
where $\mathbf{r}_{\min}$ represents the remaining received signal, when only the primary node with the weakest channel gain (at the secondary receiver) is active. The transmitted signal from the corresponding primary node is defined as\footnote{In general, the signal variance can be estimated by the sample variance for sufficiently large number of samples as $\sigma^{2}_{p}\approx (1/L)\sum^{L-1}_{l=0}\left|s_{\min}(l)\right|^{2}-((1/L)\sum^{L-1}_{l=0}s_{\min}(l))^{2}$. If the sample mean goes to zero, then $\sigma^{2}_{p}\approx (1/L)\sum^{L-1}_{l=0}\left|s_{\min}(l)\right|^{2}$.} $s_{\min}$ with $\mathbb{E}[s_{\min}s_{\min}^{\mathcal{H}}]=\sigma^{2}_{p}$. Also, $\sqrt{\beta_{\min}}\mathbf{c}_{\min}$ satisfies that $\beta_{\min}\left\|\mathbf{c}_{\min}\right\|^{2}=\min\{\beta_{\min}\left\|\mathbf{c}_{p,i}\right\|^{2}\}^{m_{p}}_{i=1}$, where $\mathbf{c}_{p,i}$ represents the $i$th column vector of $\mathbf{C}_{p}$. Notice that a Gaussian vector is isotropically distributed, i.e., it remains Gaussian distributed even if its norm is under some constraint \cite[Theorem 1.5.5]{b:multivariate}. Thus, $\sqrt{\beta_{\min}}\mathbf{c}_{\min}\overset{\text{d}}=\mathcal{CN}(\mathbf{0},\beta_{\min}\mathbf{I}_{N})$ and $\beta_{\min}\left\|\mathbf{c}_{\min}\right\|^{2}$ is the minimum of $m_{p}$ non-identical $\mathcal{\chi}^{2}_{2N}$ RVs.

\begin{lem}
The PDF of $\mathcal{Y}\triangleq \beta_{\min}\left\|\mathbf{c}_{\min}\right\|^{2}$ is presented in a closed-form as
\begin{align}
f_{\mathcal{Y}}(x)=\sum^{m_{p}}_{s=1}\sum^{N-1}_{\begin{subarray}{c}t_{1}=0\\t_{1}\neq t_{s}\end{subarray}}\cdots \sum^{N-1}_{\begin{subarray}{c}t_{m_{p}}=0\\t_{m_{p}}\neq t_{s}\end{subarray}}\frac{\beta^{-t_{1}}_{1}\cdots \beta^{-N}_{s}\cdots \beta^{-t_{m_{p}}}_{m_{p}}}{t_{1}!\cdots t_{m_{p}}!\Gamma(N)} x^{\sum^{m_{p}}_{\begin{subarray}{c}l=1\\l\neq s\end{subarray}}t_{l}+N-1}\exp\left(-\left(\displaystyle \sum^{m_{p}}_{t=1}\textstyle \frac{1}{\beta_{t}}\right)x\right).
\label{pdfmin}
\end{align}
\end{lem}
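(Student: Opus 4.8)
The plan is to compute the PDF of $\mathcal{Y}$ directly from its definition as the minimum of $m_{p}$ independent, non-identically distributed chi-squared random variables. First I would set $X_{i} \triangleq \beta_{\min}\|\mathbf{c}_{p,i}\|^{2}$ for $1 \leq i \leq m_{p}$; wait — more precisely, since $\mathbf{c}_{p,i} \overset{\text{d}}= \mathcal{CN}(\mathbf{0},\mathbf{I}_{N})$, the scaled quantity $\beta_{i}\|\mathbf{c}_{p,i}\|^{2}$ is a Gamma (equivalently, scaled $\mathcal{X}^{2}_{2N}$) random variable with shape $N$ and scale $\beta_{i}$, so its CDF is $F_{X_{i}}(x) = 1 - e^{-x/\beta_{i}}\sum_{t=0}^{N-1}\frac{(x/\beta_{i})^{t}}{t!}$ and its PDF is $f_{X_{i}}(x) = \frac{x^{N-1}}{\Gamma(N)\beta_{i}^{N}}e^{-x/\beta_{i}}$. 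Then $\mathcal{Y} = \min_{i} X_{i}$, and by independence the standard order-statistic identity gives
\begin{align}
f_{\mathcal{Y}}(x) = \sum_{s=1}^{m_{p}} f_{X_{s}}(x)\prod_{\begin{subarray}{c}i=1\\i\neq s\end{subarray}}^{m_{p}}\bigl(1 - F_{X_{i}}(x)\bigr).
\label{orderstat}
\end{align}

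Next I would substitute the explicit forms. The survival function $1 - F_{X_{i}}(x) = e^{-x/\beta_{i}}\sum_{t_{i}=0}^{N-1}\frac{(x/\beta_{i})^{t_{i}}}{t_{i}!}$ is a finite sum, so the product $\prod_{i\neq s}(1-F_{X_{i}}(x))$ expands, upon distributing, into a multiple sum over indices $t_{1},\dots,t_{m_{p}}$ with $t_{i}$ ranging over $\{0,\dots,N-1\}$ for each $i\neq s$. Collecting the exponential factors from \eqref{orderstat} (one $e^{-x/\beta_{s}}$ from $f_{X_{s}}$ and one $e^{-x/\beta_{i}}$ from each surviving term) yields the common factor $\exp\!\bigl(-(\sum_{t=1}^{m_{p}}1/\beta_{t})x\bigr)$; collecting the powers of $x$ (an $x^{N-1}$ from $f_{X_{s}}$ and an $x^{t_{i}}$ from each factor $i\neq s$) gives $x^{N-1+\sum_{l\neq s}t_{l}}$; and collecting the constant coefficients gives $\frac{1}{\Gamma(N)\beta_{s}^{N}}\prod_{i\neq s}\frac{\beta_{i}^{-t_{i}}}{t_{i}!}$, which is exactly the coefficient $\frac{\beta_{1}^{-t_{1}}\cdots\beta_{s}^{-N}\cdots\beta_{m_{p}}^{-t_{m_{p}}}}{t_{1}!\cdots t_{m_{p}}!\,\Gamma(N)}$ appearing in the claim (with the convention $t_{s}=N$ in the exponent notation, and the restriction $t_{i}\neq t_{s}$ being the author's shorthand for "the index $i$ differs from $s$," i.e. the $s$th slot is excluded from the product expansion). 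Matching these three pieces termwise against \eqref{pdfmin} completes the identification.

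The only genuinely non-routine point is bookkeeping: making sure the dummy index that plays the role of the "shape-$N$" slot (coming from $f_{X_{s}}$) is consistently separated from the $m_{p}-1$ "survival" indices, and that the notation $\sum_{t_{1}=0,\,t_{1}\neq t_{s}}^{N-1}$ in \eqref{pdfmin} is read as "sum over all $t_{i}$ with $i\neq s$" rather than a literal numerical restriction $t_{i}\neq N$. Once that convention is fixed, the derivation is a direct expansion with no analytic subtlety — no special functions, no limiting arguments — so I expect the main obstacle to be purely notational alignment with the form as written, not any mathematical difficulty. I would close by remarking that integrating \eqref{pdfmin} term-by-term (each term being $x^{k}e^{-cx}$) recovers a closed-form CDF via lower incomplete Gamma functions, which is what the subsequent detection-probability analysis will need.
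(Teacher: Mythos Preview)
Your proposal is correct and follows essentially the same route as the paper: the paper writes the survival function of $\mathcal{Y}$ as the product $\prod_{t}\Gamma(N,x/\beta_t)/\Gamma(N)$, differentiates (which yields exactly your order-statistic identity $f_{\mathcal{Y}}(x)=\sum_{s}f_{X_s}(x)\prod_{t\neq s}(1-F_{X_t}(x))$), and then expands each $\Gamma(N,\cdot)/\Gamma(N)$ via the finite series \cite[Eq.~(8.352.4)]{tables}, which is precisely your termwise expansion of the survival sums. Your reading of the $t_i\neq t_s$ constraint as ``exclude the $s$th slot'' is also the intended one.
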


\begin{proof}
The CDF of $\mathcal{Y}$ stems as
\begin{align}
\text{Pr}[\mathcal{Y}<x]=1-\left(\text{Pr}[\beta_{1}\left\|\mathbf{c}_{1}\right\|^{2}>x]\cdots \text{Pr}[\beta_{m_{p}}\left\|\mathbf{c}_{m_{p}}\right\|^{2}>x]\right).
\end{align}
Using the standard complementary CDF of a $\mathcal{\chi}^{2}_{2N}$ RV into the previous expression yields
\begin{align}
F_{\mathcal{Y}}(x)=1-\prod^{m_{p}}_{t=1}\frac{\Gamma \left(N,\frac{x}{\beta_{t}}\right)}{\Gamma(N)}.
\label{cdfmin}
\end{align}
By differentiating (\ref{cdfmin}), it holds that
\begin{align}
f_{\mathcal{Y}}(x)=\sum^{m_{p}}_{s=1}\frac{x^{N-1}\exp\left(-\frac{x}{\beta_{s}}\right)}{\Gamma(N)\beta^{N}_{s}}\prod^{m_{p}}_{\begin{subarray}{c}t=1\\t\neq s\end{subarray}}\frac{\Gamma\left(N,\frac{x}{\beta_{t}}\right)}{\Gamma(N)}.
\label{ppdfmin}
\end{align}
Further, expanding $\Gamma(.,.)$ in terms of finite sum series according to \cite[Eq. (8.352.4)]{tables}, (\ref{pdfmin}) is obtained.
\end{proof}

The detection probability is defined as $P_{d}\triangleq \text{Pr}[T_{\text{ED}}|\mathcal{H}_{1}>\lambda]$. In the case of ED, it is given by \cite[Eq. (63)]{j:Wang}
\begin{align}
P_{d}(\lambda)=Q_{N L}\left(\sqrt{\frac{2 L \sigma^{2}_{p}\mathcal{Y}}{N_{0}}},\sqrt{\frac{\lambda}{N_{0}}}\right).
\label{pdcond}
\end{align}

\begin{cor}
The unconditional detection probability of the considered system with $N$ receive antennas and $m_{p}$ active primary nodes is presented in a closed form as
\begin{align}
P^{(N\times m_{p})}_{d}(\lambda)=\sum^{m_{p}}_{s=1}\sum^{N-1}_{\begin{subarray}{c}t_{1}=0\\t_{1}\neq t_{s}\end{subarray}}\cdots \sum^{N-1}_{\begin{subarray}{c}t_{m_{p}}=0\\t_{m_{p}}\neq t_{s}\end{subarray}}\frac{\beta^{-t_{1}}_{1}\cdots \beta^{-N}_{s}\cdots \beta^{-t_{m_{p}}}_{m_{p}}}{t_{1}!\cdots t_{m_{p}}!\Gamma(N)} \mathcal{F}\left(\sum^{m_{p}}_{\begin{subarray}{c}l=1\\l\neq s\end{subarray}}t_{l}+N,N L,\sqrt{\frac{2 L \sigma^{2}_{p}}{N_{0}}},\sqrt{\frac{\lambda}{N_{0}}},\sum^{m_{p}}_{t=1}\frac{1}{\beta_{t}}\right),
\label{pd}
\end{align}
where
\begin{align}
\nonumber
\mathcal{F}\left(k,m,a,b,p\right)\triangleq &\frac{\Gamma(k)\Gamma(m,\frac{b^{2}}{2})}{p^{k}\Gamma(m)}+\frac{a^{2}b^{2m}\Gamma(k)\exp\left(-\frac{b^{2}}{2}\right)}{m!p^{k}2^{m}(a^{2}+2p)} \sum^{k-1}_{l=0}\left(\frac{2p}{a^{2}+2p}\right)^{l}\\
&\times \underbrace{{}_1F_{1}\left(l+1,m+1;\frac{a^{2}b^{2}}{2a^{2}+4p}\right)}_{\mathcal{T}}.
\label{1f1}
\end{align}
\end{cor}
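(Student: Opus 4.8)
The plan is to obtain the unconditional detection probability by averaging the conditional expression in (\ref{pdcond}) over the PDF of $\mathcal{Y}$ given in Lemma 2. That is, I would write
\begin{align}
\nonumber
P^{(N\times m_{p})}_{d}(\lambda)=\int_{0}^{\infty}Q_{NL}\left(\sqrt{\frac{2L\sigma^{2}_{p}x}{N_{0}}},\sqrt{\frac{\lambda}{N_{0}}}\right)f_{\mathcal{Y}}(x)\,dx.
\end{align}
Substituting (\ref{pdfmin}), the integral decomposes into a finite multiple sum (over $s$ and the indices $t_{1},\dots,t_{m_{p}}$) of elementary integrals, each of the generic form $\int_{0}^{\infty}x^{k-1}e^{-px}Q_{m}(a\sqrt{x},b)\,dx$, with $k=\sum_{l\neq s}t_{l}+N$, $m=NL$, $a=\sqrt{2L\sigma^{2}_{p}/N_{0}}$, $b=\sqrt{\lambda/N_{0}}$ and $p=\sum_{t=1}^{m_{p}}1/\beta_{t}$. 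Hence the whole task reduces to evaluating this single building-block integral in closed form and christening it $\mathcal{F}(k,m,a,b,p)$.

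To evaluate $\int_{0}^{\infty}x^{k-1}e^{-px}Q_{m}(a\sqrt{x},b)\,dx$ I would proceed as follows. First substitute $u=\sqrt{x}$ (so $x=u^{2}$, $dx=2u\,du$) to bring it to the standard shape $2\int_{0}^{\infty}u^{2k-1}e^{-pu^{2}}Q_{m}(au,b)\,du$, which matches the class of Nuttall-type Marcum-$Q$ integrals. A clean route is to split the Marcum-$Q$ using its complementary representation: write $Q_{m}(au,b)=1-\big(1-Q_{m}(au,b)\big)$, where the first term integrates trivially against $x^{k-1}e^{-px}$ to give $\Gamma(k)/p^{k}$ — but more usefully, keep $Q_{m}(au,b)$ and instead expand the \emph{lower} incomplete structure, or equivalently use the series $Q_{m}(\alpha,\beta)=e^{-\alpha^{2}/2}\sum_{r=0}^{\infty}\frac{(\alpha^{2}/2)^{r}}{r!}\frac{\Gamma(m+r,\beta^{2}/2)}{\Gamma(m+r)}$. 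Inserting this series, the $x$-integral becomes $\int_{0}^{\infty}x^{k-1+r}e^{-(p+a^{2}/2)x}\,dx=\Gamma(k+r)/(p+a^{2}/2)^{k+r}$, and the incomplete Gamma factor $\Gamma(m+r,b^{2}/2)$ can itself be expanded as a finite sum (for integer $m$) via \cite[Eq. (8.352.4)]{tables}. Re-summing the resulting double series over $r$ and the finite index, and recognizing the Pochhammer pattern, collapses everything into a single ${}_1F_{1}$; the $r$-sum produces the confluent hypergeometric function ${}_1F_{1}(l+1,m+1;\cdot)$ flagged as $\mathcal{T}$ in (\ref{1f1}), the prefactor $a^{2}b^{2m}e^{-b^{2}/2}/(m!\,p^{k}2^{m}(a^{2}+2p))$ emerges from collecting the constants and the geometric-type factor $(2p/(a^{2}+2p))^{l}$, and the leading $\Gamma(k)\Gamma(m,b^{2}/2)/(p^{k}\Gamma(m))$ term is precisely the $r=0$ (or the "$1$" part of $Q_{m}$) contribution. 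Establishing convergence and interchange of sum and integral is routine here since all terms are positive and the tail is dominated by $e^{-(p+a^{2}/2)x}$.

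The main obstacle is the bookkeeping in the re-summation step: after expanding both the Marcum-$Q$ series in $r$ and the incomplete Gamma $\Gamma(m+r,b^{2}/2)$ as a finite sum, one has a double sum that must be rearranged so that the \emph{inner} sum over $r$ is recognizable as a Kummer function with the correct parameters and argument $a^{2}b^{2}/(2a^{2}+4p)$. The nontrivial part is matching the shifted factorials $(m+r)!/(k+r-1)!$-type ratios against the $(l+1)_{r}/(m+1)_{r}$ Pochhammer ratio of ${}_1F_{1}$ and pulling the residual $l$-dependent constants out front; getting the $2^{m}$, the $(a^{2}+2p)$ in the denominator, and the exponent on $(2p/(a^{2}+2p))$ exactly right is where care is needed. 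Once the building block $\mathcal{F}(k,m,a,b,p)$ is verified, assembling (\ref{pd}) is immediate: it is just the termwise substitution of $\mathcal{F}$ with $k=\sum_{l\neq s}t_{l}+N$, $m=NL$, $a=\sqrt{2L\sigma^{2}_{p}/N_{0}}$, $b=\sqrt{\lambda/N_{0}}$, $p=\sum_{t=1}^{m_{p}}1/\beta_{t}$ into the finite sum structure inherited verbatim from (\ref{pdfmin}).
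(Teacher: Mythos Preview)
Your plan is correct and matches the paper's own proof: average the conditional $P_d$ over $f_{\mathcal{Y}}$, reduce to the building-block integral $\int_0^\infty x^{k-1}e^{-px}Q_m(a\sqrt{x},b)\,dx$ with exactly the parameter identifications you list, and read off~(\ref{pd}). The only difference is that the paper does not re-derive $\mathcal{F}(k,m,a,b,p)$ via the Marcum-$Q$ series as you sketch but simply invokes the closed form already available in the literature (\cite[Eq.~(12)]{j:sofotasiosmarcum}), so your extra bookkeeping step is unnecessary though not wrong.
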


\begin{proof}
Based on (\ref{pdcond}), the unconditional detection probability is evaluated as
\begin{align}
P_{d}(\lambda)=\int^{\infty}_{0}Q_{N L}\left(\sqrt{\frac{2 L \sigma^{2}_{p}x}{N_{0}}},\sqrt{\frac{\lambda}{N_{0}}}\right)f_{\mathcal{Y}}(x)dx.
\label{pddd}
\end{align}
Plugging (\ref{pdfmin}) into (\ref{pddd}), integrals of the following form appear
\begin{align}
\int^{\infty}_{0}x^{k-1}\exp(-p x)Q_{m}(a\sqrt{x},b)dx, \ \ \{a,b,m,p,k\}\geq 0.
\label{int}
\end{align}
Fortunately, such integrals were analytically evaluated in \cite[Eq. (12)]{j:sofotasiosmarcum}. Thus, using the latter result into (\ref{pddd}) and after performing some straightforward manipulations, (\ref{pd}) arises.
\end{proof}

At this point, it should be stated that when the first two arguments of ${}_{1}F_{1}(.,.;.)$ are non-negative integers, this expression can be relaxed to finite sum series including simple elementary functions, according to \cite[Eq. (7.11.1.10)]{b:prudnikovvol3}. In fact, this is the case presented in (\ref{1f1}), returning only simple elementary functions, which reads as
\begin{align}
\mathcal{T}=\left\{
\begin{array}{l l}     
    \exp\left(\frac{a^{2}b^{2}}{2a^{2}+4p}\right)\displaystyle \sum^{l-m}_{k=0}\textstyle \frac{(m-l)_{k}\left(-\frac{a^{2}b^{2}}{2a^{2}+4p}\right)^{k}}{k!(m+1)_{k}}, & l\geq m\\
    & \\
    \frac{(m-1)!(-m)_{l+1}}{l!\left(\frac{a^{2}b^{2}}{2a^{2}+4p}\right)^{m}}\Bigg(\displaystyle \sum^{m-l-1}_{k=0}\textstyle \frac{(l-m+1)_{k}\left(\frac{a^{2}b^{2}}{2a^{2}+4p}\right)^{k}}{k!(1-m)_{k}} & \\
		-\exp\left(\frac{a^{2}b^{2}}{2a^{2}+4p}\right)\displaystyle \sum^{l}_{k=0}\textstyle \frac{(-l)_{k}\left(-\frac{a^{2}b^{2}}{2a^{2}+4p}\right)^{k}}{k!(1-m)_{k}}\Bigg), & l<m
\end{array}\right.
\label{T}
\end{align}

\subsection{False Alarm Probability and Threshold Design}
The scenario of a false alarm probability, namely, $P_{f}(\lambda)$, can be modeled by $P_{f}(\lambda)\triangleq \text{Pr}[T_{\text{ED}}|\mathcal{H}_{0}>\lambda]$. Under the $\mathcal{H}_{0}$ hypothesis, $T_{\text{ED}}$ is the sum of the square of $N L$ independent and identically distributed Gaussian RVs with zero mean and variance $N_{0}$, i.e, $T_{\text{ED}}\overset{\text{d}}=N_{0}\mathcal{\chi}^{2}_{2 N L}$. Hence, using the standard complementary CDF of a chi-square RV, it yields
\begin{align}
P_{f}(\lambda)=\frac{\Gamma\left(N L, \frac{\lambda}{2 N_{0}}\right)}{\Gamma(N L)}.
\label{pf}
\end{align} 

As it is obvious from (\ref{pf}), the false alarm probability is an \emph{offline} operation, i.e., it is independent from the instantaneous channel gain and the number of primary signals. Thus, a convenient yet effective strategy is to select the optimum energy threshold using (\ref{pf}). Doing so, it holds that
\begin{align}
\lambda^{\ast}=P^{-1}_{f}(\tau),
\label{thr}
\end{align} 
where $\lambda^{\ast}$ represents the optimum energy threshold for a predetermined target $\tau$ (on the false alarm probability), while $P^{-1}_{f}(.)$ denotes the inverse function of $P_{f}(.)$, which can be efficiently calculated by using well-known inverse algorithms, e.g., \cite{j:didonato}.

Afterwards, the \emph{online} detection probability can be directly computed by calculating $P^{(N\times m_{p})}_{d}(\lambda^{\ast})$, using (\ref{pd}).

\subsection{Outage Probability}
Based on the above key analytical results, we are now in a position to formulate the outage probability of the considered system. Outage probability of the $i$th stream ($1\leq i \leq m_{c}$), $P^{(i)}_{\text{out}}(\gamma_{\text{th}})$, is defined as the probability that the SINR of the $i$th stream falls below a certain threshold value $\gamma_{\text{th}}\triangleq 2^{\mathcal{R}}-1$, where $\mathcal{R}$ stands for a given data transmission rate in bps/Hz.

\begin{thm}
The outage probability of the $i$th stream ($1\leq i \leq m_{c}$) is presented in a closed form as
\begin{align}
\nonumber
P^{(i)}_{\text{out}}(\gamma_{\text{th}})=&\left(1-P_{f}(\lambda^{\ast})\right)P^{p}_{\mathcal{A}}\{\emptyset\}F^{(N\times m_{c})}_{SINR_{i}}(\gamma_{\text{th}})+\sum^{M}_{j=m_{c}+1}\left(1-P^{(N\times (j-m_{c}))}_{d}(\lambda^{\ast})\right)\\
&\times \prod^{j-m_{c}}_{d_{1}=1}P^{p}_{\mathcal{A}}\{d_{1}\}\prod^{M-j}_{d_{2}=1}\left(1-P^{p}_{\mathcal{A}}\{d_{2}\}\right)F^{(N\times j)}_{SINR_{i}}(\gamma_{\text{th}}),
\label{out}
\end{align} 
where $P^{p}_{\mathcal{A}}\{\Delta\}$ represents the probability that $\Delta$ primary nodes are active, while $P^{p}_{\mathcal{A}}\{\emptyset\}$ denotes that there is no active primary node (i.e., an empty set) at the given time-instance.
\end{thm}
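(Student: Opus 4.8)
The plan is to apply the law of total probability, resolving the outage event of the $i$th secondary stream over the activity state of the primary network and over the spectrum-sensing decision that gates whether the secondary nodes transmit. The starting observation is that stream $i$ can be declared in outage \emph{only in a slot in which the secondary nodes actually transmit}, which by the protocol of Section \ref{System Model} happens precisely when the energy detector declares the band idle, i.e.\ $T_{\text{ED}}<\lambda^{\ast}$ with $\lambda^{\ast}$ fixed by (\ref{thr}). Hence $P^{(i)}_{\text{out}}(\gamma_{\text{th}})=\Pr\!\left[\{T_{\text{ED}}<\lambda^{\ast}\}\cap\{\text{SINR}_{i}<\gamma_{\text{th}}\}\right]$, and the remaining work is to evaluate this joint probability by conditioning.

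First condition on the primary activity. If no primary is active -- an event of probability $P^{p}_{\mathcal{A}}\{\emptyset\}$ -- then $\{T_{\text{ED}}<\lambda^{\ast}\}$ is the complement of a false alarm, of probability $1-P_{f}(\lambda^{\ast})$ from (\ref{pf}), and in that slot the received signal carries only the $m_{c}$ secondary streams, so $\text{SINR}_{i}$ has CDF $F^{(N\times m_{c})}_{\text{SINR}_{i}}$ from (\ref{cdfsinr}); this yields the first summand of (\ref{out}). If instead exactly $\Delta=j-m_{c}$ primary nodes are active, with $\Delta$ running over $1,\dots,m_{p}$ (equivalently $j=m_{c}+1,\dots,M$), then $\{T_{\text{ED}}<\lambda^{\ast}\}$ is a missed detection, of probability $1-P^{(N\times\Delta)}_{d}(\lambda^{\ast})$ from (\ref{pd}) -- here one invokes the worst-case construction (\ref{remsignalmin})--(\ref{pdcond}), in which the conditional detection probability is built from the weakest active link so that only the parameters of the active set enter -- while the MMSE front end now sees $j=m_{c}+\Delta$ effective sources, giving $\text{SINR}_{i}$ the CDF $F^{(N\times j)}_{\text{SINR}_{i}}$. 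Multiplying each conditional outage probability by the probability of the corresponding active configuration, written under independent primary activity as $\prod^{j-m_{c}}_{d_{1}=1}P^{p}_{\mathcal{A}}\{d_{1}\}\prod^{M-j}_{d_{2}=1}(1-P^{p}_{\mathcal{A}}\{d_{2}\})$, and summing over $j$ reproduces (\ref{out}).

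The step needing care -- and the main obstacle -- is the \emph{conditional independence} that lets the sensing factor and the $\text{SINR}_{i}$-CDF factor multiply: $T_{\text{ED}}$ aggregates primary-plus-noise energy seen while the secondary is silent, whereas $\text{SINR}_{i}$ is a functional of the time-aged channel (cf.\ (\ref{channeljoint})) and of independent noise in the transmission slot, so one must verify that, given the primary activity state, these are independent; this follows from the block-fading / Gauss--Markov model in which the slot noises are i.i.d.\ and the per-slot SINR keeps the marginal law of (\ref{cdfsinr}) regardless of temporal correlation. A second, purely combinatorial point is that the literal total-probability expansion carries a sum over the $\binom{m_{p}}{\,j-m_{c}\,}$ possible active subsets, each feeding its own i.n.i.d.\ coefficients $\{\beta_{n}\}$ into both $P^{(N\times\Delta)}_{d}$ and $F^{(N\times j)}_{\text{SINR}_{i}}$; collapsing this to the single-product form of (\ref{out}) uses the relabelling freedom and is exact in the i.i.d.\ case, with $F^{(N\times j)}_{\text{SINR}_{i}}$ otherwise understood as the per-subset CDF. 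Finally, since $j=m_{c}$ and $j=m_{c}+1,\dots,M$ exhaust all primary-activity states, the contributions sum to the claimed expression.
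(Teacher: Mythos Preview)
Your proposal is correct and follows essentially the same route as the paper: a total-probability decomposition over the primary-activity state, with the sensing outcome entering as $(1-P_f)$ in the no-primary case and as the miss-detection factor $(1-P_d^{(N\times\Delta)})$ otherwise, and the conditional outage given by the appropriate $F^{(N\times j)}_{\text{SINR}_i}$. You are in fact more careful than the paper's own appendix, which simply invokes the total probability theorem and enumerates the cases without discussing the conditional-independence point or the subset-versus-cardinality collapse that you flag.
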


\begin{proof}
The proof is given in Appendix \ref{appe}.
\end{proof}
It is noteworthy that $P^{p}_{\mathcal{A}}\{\cdot\}$ is directly related to the transmission arrival rate of each primary node. For instance, a typical model used thoroughly in wireless systems for the distribution of data traffic is the widely known Poisson process \cite{b:goldsmith}. In this case, $P^{p}_{\mathcal{A}}\{\cdot\}$ follows the inter-arrival exponential distribution modeled as $P^{p}_{\mathcal{A}}\{x\}=\exp(-v T_{s})$, where $v$ is the average transmission arrival rate. Nevertheless, the analysis and/or the efficient modeling of transmission arrival rates represents a research topic out of the scope of current work. 

{\color{black}\subsection{Area Under the ROC Curve}
The accuracy of ED plays a crucial role to the outage probability, which is reflected on the underlying detection and false alarm statistics. Due to this reason, we further investigate the performance of ED using a more solid measure, the so-called AUC. The main benefit of AUC is that it jointly evaluates the performance of both the detection and false alarm in the entire energy threshold region.

More specifically, the conditional AUC (on a given channel gain) is defined as \cite[Eq. (5)]{j:aparattu}
\begin{align}
\text{AUC}(\mathcal{Y})=-\int^{\infty}_{0}P_{d}(\lambda')\frac{\partial P_{f}(\lambda')}{\partial \lambda'}d\lambda',
\label{pdpf}
\end{align}
where $\lambda'$ stands for the normalized energy threshold $\lambda'\triangleq \lambda/N_{0}$. 

\begin{cor}
The conditional AUC of the considered ED scheme is presented in a closed form as
\begin{align}
\text{AUC}(\mathcal{Y})=1-\exp\left(-\frac{L\sigma^{2}_{p}\mathcal{Y}}{N_{0}}\right)\sum^{N L-1}_{l=0}\frac{(N L)_{l}}{l!2^{N L+l}} {}_1F_{1}\left(N L+l,N L;\frac{L\sigma^{2}_{p}\mathcal{Y}}{2 N_{0}}\right).
\label{condauc}
\end{align}
\end{cor}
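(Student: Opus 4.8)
The plan is to start from the defining integral \eqref{pdpf} and substitute the two ingredients already in hand: the conditional detection probability $P_{d}(\lambda')$ from \eqref{pdcond}, written in the normalized variable $\lambda'=\lambda/N_{0}$ as $Q_{NL}\bigl(\sqrt{2L\sigma^{2}_{p}\mathcal{Y}/N_{0}},\sqrt{\lambda'}\bigr)$, and the false alarm probability $P_{f}(\lambda')$ from \eqref{pf}, which in the normalized variable is $\Gamma\bigl(NL,\lambda'/2\bigr)/\Gamma(NL)$. First I would compute $\partial P_{f}(\lambda')/\partial\lambda'$ by differentiating the upper incomplete Gamma function: since $\frac{d}{du}\Gamma(s,u)=-u^{s-1}e^{-u}$, a chain rule with $u=\lambda'/2$ yields $\partial P_{f}/\partial\lambda' = -\frac{1}{2\Gamma(NL)}(\lambda'/2)^{NL-1}e^{-\lambda'/2}$. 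Plugging this into \eqref{pdpf} turns the AUC into a single integral of the form $\frac{1}{2^{NL}\Gamma(NL)}\int_{0}^{\infty}(\lambda')^{NL-1}e^{-\lambda'/2}\,Q_{NL}\bigl(a\sqrt{\lambda'},\,\text{(none)}\bigr)d\lambda'$ — wait, more precisely the Marcum-$Q$ here has the threshold $\sqrt{\lambda'}$ as its \emph{second} argument, so I would rewrite it using $Q_{\nu}(\alpha,\beta)$ with $\alpha=\sqrt{2L\sigma^{2}_{p}\mathcal{Y}/N_{0}}$ fixed and $\beta=\sqrt{\lambda'}$.

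\textbf{The main computation.} The substitution $t=\lambda'$ (or $t=\sqrt{\lambda'}$ to expose the standard Marcum form $\int t^{\cdot}e^{-\cdot t^{2}}Q_{\nu}(\alpha,t)\,dt$) reduces everything to a known closed-form integral. The key step is to recognize that
\begin{align}
\text{AUC}(\mathcal{Y})=\frac{1}{2^{NL}\Gamma(NL)}\int_{0}^{\infty}u^{NL-1}e^{-u/2}\,Q_{NL}\!\left(\sqrt{\tfrac{2L\sigma^{2}_{p}\mathcal{Y}}{N_{0}}},\sqrt{u}\right)du
\label{aucplan}
\end{align}
is an instance of the integral $\int_{0}^{\infty}x^{\mu-1}e^{-px}Q_{m}(a\sqrt{x},\sqrt{bx})\,dx$ — or, after the change of variable, of $\int_{0}^{\infty}y^{2m-1}e^{-py^{2}}Q_{m}(a,cy)\,dy$ — both of which admit closed-form evaluations in the literature (e.g., the Nuttall-$Q$ / Marcum-$Q$ integral tables, and in particular the same family of results from \cite{j:sofotasiosmarcum} that was invoked for the detection probability). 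Carrying out that evaluation produces a finite sum of ${}_1F_{1}$ terms; the parameters $p=1/2$, $a^{2}=2L\sigma^{2}_{p}\mathcal{Y}/N_{0}$ then have to be matched carefully to land on the stated form, namely the leading $1$, the exponential prefactor $\exp(-L\sigma^{2}_{p}\mathcal{Y}/N_{0})$, the Pochhammer coefficients $(NL)_{l}/(l!\,2^{NL+l})$, and the Kummer function ${}_1F_{1}(NL+l,NL;L\sigma^{2}_{p}\mathcal{Y}/(2N_{0}))$. The ``$1-$'' at the front should emerge naturally because $\text{AUC}\to 1$ as the argument of the Marcum-$Q$ grows, equivalently as $\mathcal{Y}\to\infty$; one can also cross-check it against the $\mathcal{Y}=0$ limit, where every ${}_1F_{1}(\cdot,\cdot;0)=1$ and the sum should collapse to give $\text{AUC}(0)=1/2$, the value for a non-informative detector.

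\textbf{Main obstacle.} The hard part will be the bookkeeping in matching the generic integral result to the specific parametrization: the generic formula in \cite{j:sofotasiosmarcum} is stated with an arbitrary exponent $k$, scale $a$, threshold-scale $b$, and decay $p$, and here we have the slightly degenerate/special situation where the Marcum order $NL$ equals the power $NL$ (i.e., $k=m$ in the notation of \eqref{1f1}), the decay is exactly $p=1/2$, and the second Marcum argument scales like $\sqrt{u}$ rather than $u$. Getting the ${}_1F_{1}$ argument to come out as $L\sigma^{2}_{p}\mathcal{Y}/(2N_{0})$ rather than some rescaled version, and confirming that the upper summation limit is exactly $NL-1$ (not $NL$ or $k-1$ for a shifted $k$), will require careful algebra; I would verify the final expression numerically against direct quadrature of \eqref{pdpf} for a couple of small $(N,L)$ pairs before declaring victory. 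Everything else — the differentiation of $P_{f}$, the reduction to a single integral, and the asymptotic sanity checks — is routine.
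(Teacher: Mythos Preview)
Your plan is essentially identical to the paper's proof: differentiate $P_{f}(\lambda')=\Gamma(NL,\lambda'/2)/\Gamma(NL)$, substitute together with \eqref{pdcond} into \eqref{pdpf} to reach exactly your integral \eqref{aucplan}, and then invoke a closed-form Marcum-$Q$ integral from \cite{j:sofotasiosmarcum}. The only cosmetic difference is that the paper points to \cite[Eq.~(8)]{j:sofotasiosmarcum} (the variant with the \emph{second} Marcum argument depending on the integration variable and the first fixed), rather than Eq.~(12) used earlier for $P_d$; your ``main obstacle'' paragraph already anticipates this distinction, so the bookkeeping you flag is precisely the one step the paper glosses over as ``simple manipulations.''
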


\begin{proof}
The detailed proof is presented in Appendix \ref{appj}.
\end{proof}
Averaging (\ref{condauc}) over the PDF of $\mathcal{Y}$, the unconditional (average) AUC is presented as follows.
\begin{prop}
The unconditional AUC is given by
\begin{align}
\nonumber
\overline{\text{AUC}}=&1-\sum^{N L-1}_{l=0}\sum^{l}_{k=0}\sum^{m_{p}}_{s=1}\sum^{N-1}_{\begin{subarray}{c}t_{1}=0\\t_{1}\neq t_{s}\end{subarray}}\cdots \sum^{N-1}_{\begin{subarray}{c}t_{m_{p}}=0\\t_{m_{p}}\neq t_{s}\end{subarray}}\frac{(N L)_{l}\beta^{-t_{1}}_{1}\cdots \beta^{-N}_{s}\cdots \beta^{-t_{m_{p}}}_{m_{p}}\Gamma\left(\sum^{m_{p}}_{\begin{subarray}{c}l=1\\l\neq s\end{subarray}}t_{l}+N\right)}{l!2^{N L+l}t_{1}!\cdots t_{m_{p}}!\Gamma(N)\left(\sum^{m_{p}}_{t=1}\frac{1}{\beta_{t}}+\frac{L\sigma^{2}_{p}}{N_{0}}\right)^{\sum^{m_{p}}_{\begin{subarray}{c}l=1\\l\neq s\end{subarray}}t_{l}+N}}\\
&\times \frac{(-l)_{k}\left(N L-\sum^{m_{p}}_{\begin{subarray}{c}l=1\\l\neq s\end{subarray}}t_{l}\right)_{k}\left(\frac{L \sigma^{2}_{p}}{2N_{0}\left(\sum^{m_{p}}_{t=1}\frac{1}{\beta_{t}}+\frac{L\sigma^{2}_{p}}{N_{0}}\right)}\right)^{k}}{k!(N L)_{k}\left(1-\frac{L \sigma^{2}_{p}}{2N_{0}\left(\sum^{m_{p}}_{t=1}\frac{1}{\beta_{t}}+\frac{L\sigma^{2}_{p}}{N_{0}}\right)}\right)^{\sum^{m_{p}}_{\begin{subarray}{c}l=1\\l\neq s\end{subarray}}t_{l}+l}}.
\label{uncondauc}
\end{align}
\end{prop}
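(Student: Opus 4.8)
The plan is to derive (\ref{uncondauc}) by averaging the conditional result (\ref{condauc}) over the density $f_{\mathcal{Y}}$ given in (\ref{pdfmin}); that is, by evaluating $\overline{\text{AUC}}=\int_{0}^{\infty}\text{AUC}(x)\,f_{\mathcal{Y}}(x)\,dx$. After substituting (\ref{condauc}), the constant term contributes $\int_{0}^{\infty}f_{\mathcal{Y}}(x)\,dx=1$, which accounts for the leading $1$ in (\ref{uncondauc}). The remaining contribution is a \emph{finite} linear combination---over the outer index $l$ of (\ref{condauc}) and over the indices $s,t_{1},\dots,t_{m_{p}}$ of (\ref{pdfmin})---of integrals of the type $\int_{0}^{\infty}x^{\mu-1}e^{-\vartheta x}\,{}_{1}F_{1}(NL+l,NL;\rho x)\,dx$, where I collect $\mu\triangleq N+\sum_{l'\neq s}t_{l'}$ from the power of $x$ in $f_{\mathcal{Y}}$, $\vartheta\triangleq \sum_{t=1}^{m_{p}}\frac{1}{\beta_{t}}+\frac{L\sigma_{p}^{2}}{N_{0}}$ from the product of the exponential in $f_{\mathcal{Y}}$ and the one in (\ref{condauc}), and $\rho\triangleq \frac{L\sigma_{p}^{2}}{2N_{0}}$ from the argument of ${}_{1}F_{1}$. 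Since every sum is finite and the integrand is absolutely integrable (the exponential growth of ${}_{1}F_{1}$ at rate $\rho$ is dominated by the decay at rate $\vartheta>2\rho/2=\rho$), the interchange of summation and integration requires no further justification.

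Next I would evaluate the above integral in closed form via the Laplace transform of Kummer's confluent hypergeometric function, namely $\int_{0}^{\infty}x^{\mu-1}e^{-\vartheta x}\,{}_{1}F_{1}(a;b;\rho x)\,dx=\Gamma(\mu)\,\vartheta^{-\mu}\,{}_{2}F_{1}(a,\mu;b;\rho/\vartheta)$ \cite[Eq.~(7.621.4)]{tables}, whose convergence requirement $\mathrm{Re}(\vartheta)>\mathrm{Re}(\rho)$ is met here because $\sum_{t}1/\beta_{t}+L\sigma_{p}^{2}/N_{0}>L\sigma_{p}^{2}/(2N_{0})$. This already produces the $\Gamma(\mu)$ factor, the denominator power $(\sum_{t}1/\beta_{t}+L\sigma_{p}^{2}/N_{0})^{\mu}$, and a Gauss hypergeometric function whose argument $\rho/\vartheta=\frac{L\sigma_{p}^{2}}{2N_{0}(\sum_{t}1/\beta_{t}+L\sigma_{p}^{2}/N_{0})}$ is exactly the one appearing inside (\ref{uncondauc}).

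The final step turns ${}_{2}F_{1}(NL+l,\mu;NL;z)$ into the terminating inner $k$-sum of (\ref{uncondauc}). Applying Euler's transformation ${}_{2}F_{1}(a,b;c;z)=(1-z)^{c-a-b}\,{}_{2}F_{1}(c-a,c-b;c;z)$ converts the first parameter to $-l$, so the transformed series stops at $k=l$ and equals $(1-z)^{-l-\mu}\sum_{k=0}^{l}\frac{(-l)_{k}\,(NL-\mu)_{k}}{(NL)_{k}\,k!}\,z^{k}$. Assembling the prefactor $(NL)_{l}/(l!\,2^{NL+l})$ inherited from (\ref{condauc}), the coefficient $\beta_{1}^{-t_{1}}\cdots\beta_{s}^{-N}\cdots\beta_{m_{p}}^{-t_{m_{p}}}/(t_{1}!\cdots t_{m_{p}}!\,\Gamma(N))$ inherited from (\ref{pdfmin}), the factor $\Gamma(\mu)\vartheta^{-\mu}$, and this terminating series, and writing $1-z=(\sum_{t}1/\beta_{t}+L\sigma_{p}^{2}/(2N_{0}))/\vartheta$, collapses everything into the nested sum of (\ref{uncondauc}).

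The principal obstacle is the bookkeeping in this last step: one must select the transformation that makes the hypergeometric series terminate (Pfaff's transformation also terminates, but yields a less convenient argument of the form $z/(z-1)$), and then carefully re-express the Pochhammer symbols $(-l)_{k}$, $(NL-\mu)_{k}$, $(NL)_{k}$ together with the power of $1-z$ purely in terms of $N$, $L$, $l$, $k$ and $\sum_{l'\neq s}t_{l'}$. A minor point worth a remark is that $NL-\mu$ can be a non-positive integer for some index combinations, so that $(NL-\mu)_{k}$ vanishes beyond some $k<l$ and the inner sum truncates earlier; this is harmless and does not alter the closed form.
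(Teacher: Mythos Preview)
Your proposal is correct and follows essentially the same route as the paper: both average (\ref{condauc}) against (\ref{pdfmin}), invoke \cite[Eq.~(7.621.4)]{tables} to obtain the intermediate ${}_{2}F_{1}$, and then terminate the hypergeometric series. The only cosmetic difference is that the paper cites \cite[Eq.~(07.23.03.0145.01)]{wolfram} for the last step, whereas you achieve the same terminating $k$-sum via Euler's transformation; one small caution is that your own computation gives the second ${}_{2}F_{1}$ parameter as $\mu=N+\sum_{l'\neq s}t_{l'}$, so the Pochhammer $(NL-\mu)_{k}$ and the exponent of $1-z$ you obtain differ from the displayed (\ref{uncondauc}) by that extra $N$, which is worth flagging when you ``collapse'' the expressions.
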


\begin{proof}
The proof is relegated in Appendix \ref{appk}.
\end{proof}
}

\section{Impact of the Transmission Power Used by the Secondary Network}
\label{Impact of the Transmission Power Used by the Secondary Network}

\subsection{Transmission Power of Secondary Nodes}
First, we define the transmission power of the receiver in the case of the aforementioned signaling process (c.f., Fig. \ref{fig1}). Recall that in the case when the receiver senses the spectrum busy (idle) by a primary transmission, upon an ongoing secondary communication, then it immediately informs the secondary nodes to terminate (initiate) their transmissions using a certain probe message. In order not to cause an additional co-channel interference to the potentially active primary node(s), the power used for this message is appropriately upper bounded. Particularly, it is defined as\footnote{Note that $p_{R}$ is the fixed power of the secondary receiver, whereas it is determined by the $Q_{R}$ statistic, which is computed during the training phase. It can be updated in a per frame basis, i.e., in a consecutive training phase.}
\begin{align}
p_{R}=\min\left\{p_{\max},\frac{w_{\text{th}}}{Q_{R}}\right\},
\label{ptr}
\end{align}   
where $Q_{R}\triangleq \mathbb{E}[\max_{i}\{\|\mathbf{g}_{i}\|^{2}\}^{m_{p}}_{i=1}]$, $w_{\text{th}}$ denotes the outage power threshold of the primary service with regards to the secondary transmission(s), which is assumed as a predetermined parameter, already known to all the secondary nodes, and $p_{\max}$ denotes the maximum achievable (unconstrained) power of the secondary system. 

\begin{cor}
The aforementioned transmission power at the receiver, $p_{R}$, is expressed as
\begin{align}
p_{R}=\left(\frac{1}{p_{\max}}+\frac{Q_{R}}{w_{\text{th}}}\right)^{-1},
\label{ptrclformmm}
\end{align}
where $Q_{R}$ is given in a closed form by
\begin{align}
Q_{R}=\sum^{m_{p}}_{i=1}\sum^{m_{p}}_{l=0}\frac{(-1)^{l}b^{N}_{R,i}}{l!\Gamma(N)}\underbrace{\sum^{m_{p}}_{n_{1}=1}\cdots \sum^{m_{p}}_{n_{l}=1}}_{n_{1}\neq \cdots \neq n_{l}\cdots \neq l}\sum^{N-1}_{k_{1}=0}\cdots \sum^{N-1}_{k_{l}=0}\left(\prod^{l}_{t=1}\frac{b_{R,k_{t}}}{k_{t}!}\right) \frac{\Gamma\left(N+\sum^{l}_{t=1}k_{t}+1\right)}{\left(b_{R,i}+\sum^{l}_{t=1}b_{R,n_{t}}\right)^{N+\sum^{l}_{t=1}k_{t}+1}},
\label{ptrclform}
\end{align}
where $b_{R,i}\triangleq (\beta_{i}-\hat{\beta}_{i})\alpha^{2}$ is a certain parameter corresponding to the link between the secondary receiver and the $i$th primary node ($1\leq i \leq m_{p}$).
\end{cor}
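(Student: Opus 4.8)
The plan is to prove the two displays in the corollary separately. The first display \eqref{eq:ptrclformmm}, rewriting $p_R=\min\{p_{\max},w_{\text{th}}/Q_R\}$ as $(1/p_{\max}+Q_R/w_{\text{th}})^{-1}$, is purely algebraic: for any two positive reals $a,b$ one has $\min\{a,b\}=(1/a+1/b)^{-1}$ precisely when\dots\ actually that identity is false in general, so I would instead read \eqref{eq:ptrclformmm} as the operative \emph{definition/approximation} adopted by the authors (a smooth surrogate for the hard minimum, tight when one term dominates), and simply state that \eqref{eq:ptr} is replaced by \eqref{eq:ptrclformmm}; no work is needed there beyond this remark. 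The substance of the corollary is therefore the closed form \eqref{eq:ptrclform} for $Q_R=\mathbb{E}[\max_i\{\|\mathbf g_i\|^2\}_{i=1}^{m_p}]$.

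For \eqref{eq:ptrclform} I would proceed as follows. First, recall from \eqref{eq:channeljoint} that $\mathbf g_i\overset{\text d}=\mathcal{CN}(\mathbf 0,(\beta_i-\hat\beta_i)\alpha^2\mathbf I_N)$, so with $b_{R,i}\triangleq(\beta_i-\hat\beta_i)\alpha^2$ the quantity $\|\mathbf g_i\|^2$ is $b_{R,i}$ times a $\chi^2_{2N}$ variable; hence $\Pr[\|\mathbf g_i\|^2\le x]=1-\Gamma(N,x/b_{R,i})/\Gamma(N)$, and, expanding the upper incomplete Gamma function as a finite sum via \cite[Eq. (8.352.4)]{tables}, $\Pr[\|\mathbf g_i\|^2\le x]=1-\exp(-x/b_{R,i})\sum_{k=0}^{N-1}(x/b_{R,i})^k/k!$. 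Second, write the expectation of the maximum of the $m_p$ independent (i.n.i.d.) variables $\|\mathbf g_i\|^2$ using the order-statistics identity
\begin{align}
Q_R=\int_0^\infty\Big(1-\prod_{i=1}^{m_p}F_{\|\mathbf g_i\|^2}(x)\Big)\,dx
=\int_0^\infty\sum_{i=1}^{m_p} f_{\|\mathbf g_i\|^2}(x)\prod_{\substack{t=1\\ t\ne i}}^{m_p}F_{\|\mathbf g_t\|^2}(x)\,dx,
\end{align}
either of which is serviceable. Third, substitute the finite-sum forms of the CDFs/PDF, expand the product $\prod_{t\ne i}(1-\exp(-x/b_{R,t})\sum_k(\cdot))$ by the binomial/multinomial theorem — this is exactly what generates the nested sums $\sum_{l=0}^{m_p}\frac{(-1)^l}{l!}\sum_{n_1\ne\cdots\ne n_l}\sum_{k_1,\dots,k_l=0}^{N-1}$ and the product $\prod_{t=1}^l b_{R,k_t}^{?}/k_t!$ in \eqref{eq:ptrclform} — collect the resulting integrand into the form $x^{N+\sum_t k_t}\exp(-(b_{R,i}^{-1}+\sum_t b_{R,n_t}^{-1})x)$ up to constants, and finally evaluate each term with the Gamma integral $\int_0^\infty x^{a}e^{-px}dx=\Gamma(a+1)/p^{a+1}$. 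Matching exponents and prefactors then yields \eqref{eq:ptrclform}.

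The main obstacle is purely bookkeeping: making the multinomial expansion of $\prod_{t\ne i}(1-e^{-x/b_{R,t}}\sum_{k=0}^{N-1}(x/b_{R,t})^k/k!)$ line up index-for-index with the compact but slightly opaque notation of \eqref{eq:ptrclform} (the subscript clash between the summation variable $l$ in $\sum_{l=0}^{m_p}$ and the constraint ``$n_1\ne\cdots\ne n_l\cdots\ne l$'', and the meaning of $b_{R,k_t}$ versus $b_{R,n_t}$, need to be tracked carefully). There is no analytic difficulty — every integral is elementary once the incomplete Gamma functions are written as finite sums, and independence of the $\mathbf g_i$ (different primary links) is what legitimizes factoring the joint CDF. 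I would therefore relegate the detailed expansion to an appendix and present only the three structural steps above in the main text. $\hfill\blacksquare$
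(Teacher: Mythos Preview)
Your treatment of the closed form for $Q_R$ is essentially the paper's own argument: the paper also writes the density of $\max_i\|\mathbf g_i\|^2$ as $\sum_i f_{\|\mathbf g_i\|^2}(x)\prod_{l\ne i}F_{\|\mathbf g_l\|^2}(x)$, expands each incomplete-Gamma CDF as the finite sum $\exp(-x/b_{R,i})\sum_{k=0}^{N-1}(x/b_{R,i})^k/k!$, opens the product via an inclusion--exclusion/multinomial identity, and then integrates each resulting monomial-times-exponential term using $\int_0^\infty x^a e^{-px}dx=\Gamma(a+1)/p^{a+1}$. Your bookkeeping remarks about the index clashes in \eqref{ptrclform} are well taken, but the structural steps coincide.

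Where you diverge is on \eqref{ptrclformmm}. You correctly note that $\min\{a,b\}=(1/a+1/b)^{-1}$ is false as an algebraic identity and then propose to simply \emph{adopt} \eqref{ptrclformmm} as a smooth surrogate for the hard minimum. The paper, however, does not present it that way: it gives a short probabilistic derivation. Under Rayleigh fading, the received SNR $X_R$ of the probe message is exponential with rate $N_0/(p_{\max}\bar X_R)$ when $Q_R<w_{\rm th}/p_{\max}$ and exponential with rate $N_0 Q_R/(w_{\rm th}\bar X_R)$ otherwise; the paper then combines the two regimes through $1-F_{X_R}(x)=(1-F_{X_R\mid p_{\max}}(x))(1-F_{X_R\mid w_{\rm th}/Q_R}(x))$, which yields a single exponential with rate $N_0(1/p_{\max}+Q_R/w_{\rm th})/\bar X_R$, and reads off the effective transmit power $p_R=(1/p_{\max}+Q_R/w_{\rm th})^{-1}$. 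Whether or not one regards that CDF-product step as fully rigorous, it is the argument the authors intend, and it is worth reproducing rather than waving \eqref{ptrclformmm} through as a definition.
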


\begin{proof}
The proof is provided in Appendix \ref{appg}.
\end{proof}

Notice that $p_{R}$ is formed by using the channel estimates from the training phase. However, since the secondary receiver has full awareness of the channel time-variation (i.e., known $\alpha$), (\ref{ptrclform}) represents quite an efficient ceiling on the corresponding transmission power.\footnote{We assume that the secondary system is not aware of the instantaneous transmit/receive status for each primary node at each frame duration. Hence, $Q_{R}$ is formulated so as to protect \emph{all} the links between secondary receiver and primary nodes. In the simplified scenario when the secondary receiver knows the exact primary receiver at each frame (or when it is fixed), then $Q_{R}$ is still obtained from (\ref{ptrclform}) by setting $m_{p}=1$.}

The transmission power for all the secondary transmitters can be obtained quite similarly. In particular, referring back to the structure of $\mathbf{H}_{\text{tr}}=[\mathbf{h}_{1},\ldots,\mathbf{h}_{m_{p}},\mathbf{h}_{1},\ldots,\mathbf{h}_{m_{c}}]$ and $\mathbf{\Psi}=[\boldsymbol{\psi}_{1},\ldots,\boldsymbol{\psi}_{m_{p}},\boldsymbol{\psi}_{1},\ldots,\boldsymbol{\psi}_{m_{c}}]$ from (\ref{trainingsignal}), each secondary transmitter sends its pilot in its corresponding symbol-time duration. Notice that the pilots from primary nodes are foregoing the ones of the secondary nodes. Hence, each secondary transmitter can capture its channel response with regards to every primary node, by monitoring the first $m_{p}$ pilots, during the training phase. Then, using MMSE channel estimation (as explicitly described earlier), the $j$th transmission power at the corresponding secondary node, $p_{j}$, is determined by
\begin{align}
p_{j}=\left(\frac{1}{p_{\max}}+\frac{Q_{j}}{w_{\text{th}}}\right)^{-1}, \ \ 1\leq j\leq m_{c},
\label{pti}
\end{align}
where $Q_{j}$ is directly obtained from (\ref{ptrclform}), but denoting the $j$th secondary transmitter this time, instead of the secondary receiver.\footnote{We use channel reciprocity between primary and secondary nodes in order to formulate the aforementioned transmission powers in (\ref{ptr}) and (\ref{pti}).}

In the remaining symbol-time duration of training phase, where the secondary pilot symbol transmissions are sequentially established, $\{p_{j}\}^{m_{c}}_{j=1}$ are used to inform the secondary receiver about the corresponding channel states. 

\subsection{Unexpected Co-channel Interference at the Primary Nodes}
All the simultaneous secondary transmissions should not cause unexpected co-channel interference to any primary node. Thereby, the following condition should be satisfied
\begin{align}
\mathcal{I}_{j}\leq w_{\text{th}}, \ \ 1\leq j\leq m_{p},
\label{intcochannel}
\end{align} 
where $\mathcal{I}_{j}$ denotes the aggregate interfering power to the $j$th primary node from all the secondary transmitters.

In order to analytically evaluate $\mathcal{I}_{j}$, consider the case when the receiver senses the channel busy during an ongoing multi-node ($m_{c}$-fold) secondary transmission and then broadcasts its termination signal back to the secondary transmitters. Doing so, the worst case scenario in terms of unexpected co-channel interference includes the aggregate interfering power of $m_{c}+1$ signals. Assuming that the phases of the individual secondary signals fluctuate significantly, due to mutually independent modulation, the latter aggregate interference can be efficiently formed as an incoherent addition of the powers from $m_{c}+1$ signals \cite{j:yacoub}, which is a suitable model for practical applications \cite{j:incoherent}. Hence, for Rayleigh fading channels, each secondary signal power follows the exponential distribution and, thus, $\mathcal{I}_{j}$ is distributed by \cite[Eq. (5)]{j:rayleighinid}
\begin{align}
f_{\mathcal{I}_{j}}(x)=\sum^{m_{c}+1}_{i=1}\left(\displaystyle \prod^{i}_{\begin{subarray}{c}k=1\\k\neq i\end{subarray}}\textstyle \frac{p_{i}\bar{q}_{j,i}}{\left(p_{i}\bar{q}_{j,i}-p_{k}\bar{q}_{j,k}\right)}\right) \frac{\exp\left(-\frac{x}{p_{i}\bar{q}_{j,i}}\right)}{p_{i}\bar{q}_{j,i}},\ \ i=1,\ldots,m_{c},R
\label{ipdf}
\end{align}
where $\bar{q}_{j,i}\triangleq d^{-\omega_{i}}_{j,i}$ denotes the link distance between the $j$th primary and $i$th secondary node, while $R$ stands for the secondary receiver. Then, using the standard complementary CDF of exponential RVs, the probability of unexpected co-channel interference at the $j$th primary node is expressed as
\begin{align}
\text{Pr}[\mathcal{I}_{j}>w_{\text{th}}]=\int^{\infty}_{w_{\text{th}}}f_{\mathcal{I}_{j}}(x)dx=\sum^{m_{c}+1}_{i=1}\left(\displaystyle \prod^{i}_{\begin{subarray}{c}k=1\\k\neq i\end{subarray}}\textstyle \frac{p_{i}\bar{q}_{j,i}}{\left(p_{i}\bar{q}_{j,i}-p_{k}\bar{q}_{j,k}\right)}\right)\exp\left(-\frac{w_{\text{th}}}{p_{i}\bar{q}_{j,i}}\right).
\label{ccdfint}
\end{align}

\section{Numerical Results}
\label{Numerical Results}
\begin{figure}[!t]
\centering
\includegraphics[trim=1.5cm 0.3cm 2.5cm 1.2cm, clip=true,totalheight=0.25\textheight]{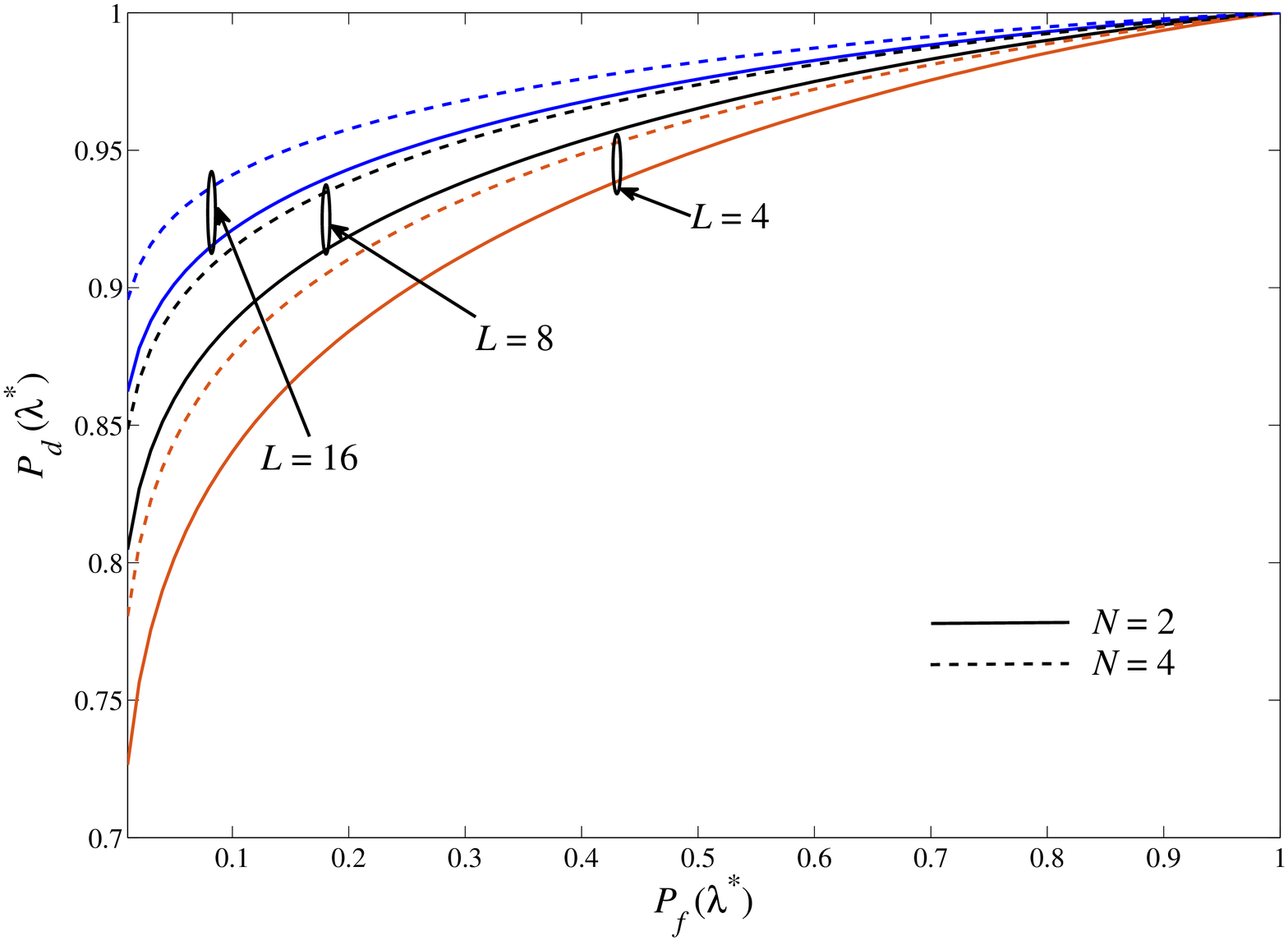}
\caption{Analytical ROC curve of the considered scheme for $m_{p}=4$ with $d_{1}=0.31$, $d_{2}=0.1$, $d_{3}=0.15$, and $d_{4}=0.2$.}
\label{fig2}
\end{figure}

\begin{figure}[!t]
\centering
\includegraphics[trim=1.3cm 0.3cm 2.5cm 1.2cm, clip=true,totalheight=0.25\textheight]{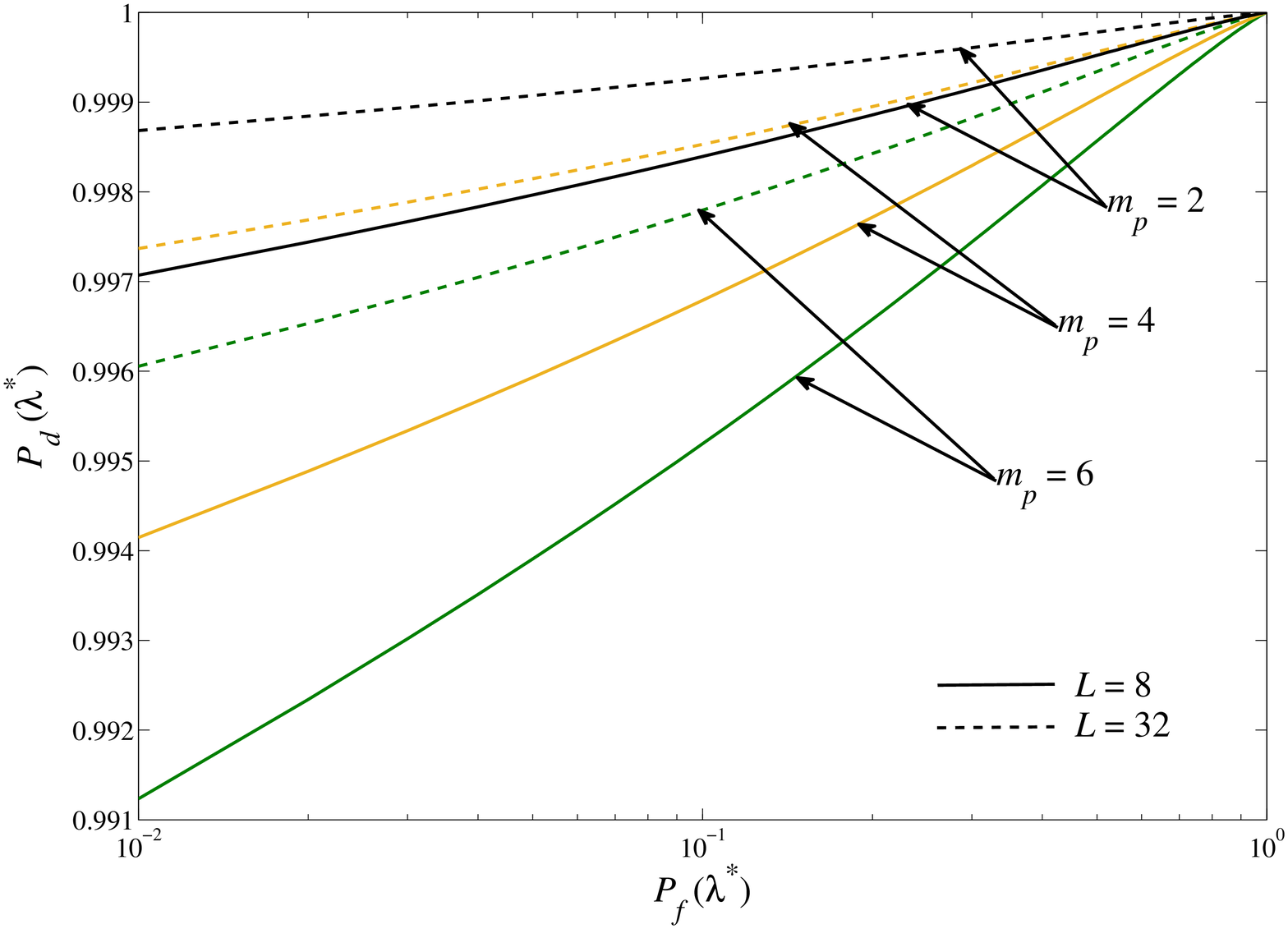}
\caption{Analytical ROC curve of the considered scheme for $N=2$, various numbers of primary nodes and identical link distances with respect to the secondary receiver, i.e., $\{d_{i}\}^{m_{p}}_{i=1}=0.1$.}
\label{fig3}
\end{figure}

\begin{figure}[!t]
\centering
\includegraphics[trim=1.5cm 0.3cm 2.5cm 1.2cm, clip=true,totalheight=0.25\textheight]{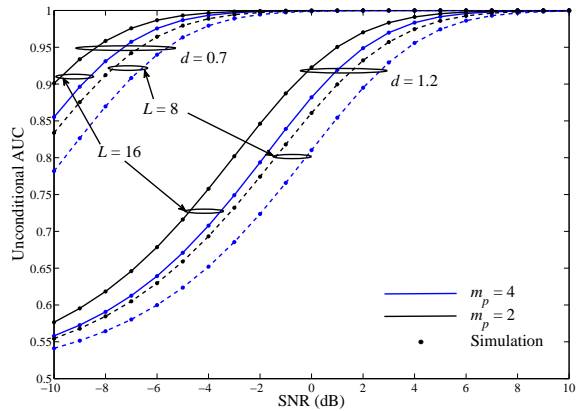}
\caption{Unconditional AUC vs. various SNR values of the primary nodes, considering identical link distances, while $N=4$.}
\label{fig4}
\end{figure}

\begin{figure}[!t]
\centering
\includegraphics[trim=1.5cm 0.3cm 2.5cm 1.2cm, clip=true,totalheight=0.25\textheight]{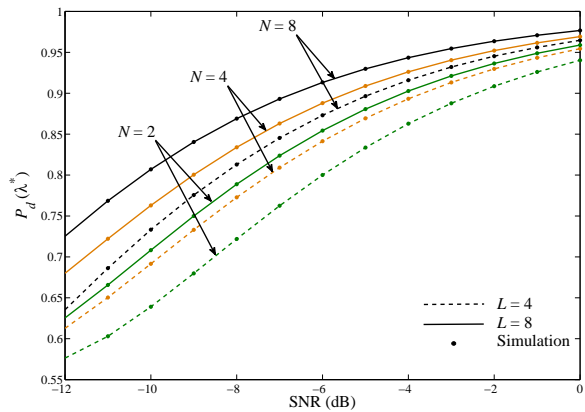}
\caption{Detection probability vs. various input SNR values for the primary nodes, when $\{d_{i}\}^{m_{p}}_{i=1}=0.3$, $m_{p}=2$ and $P_{f}=0.01$.}
\label{fig5}
\end{figure}

\begin{figure}[!t]
\centering
\includegraphics[trim=1.2cm 0.3cm 2.5cm 1.2cm, clip=true,totalheight=0.25\textheight]{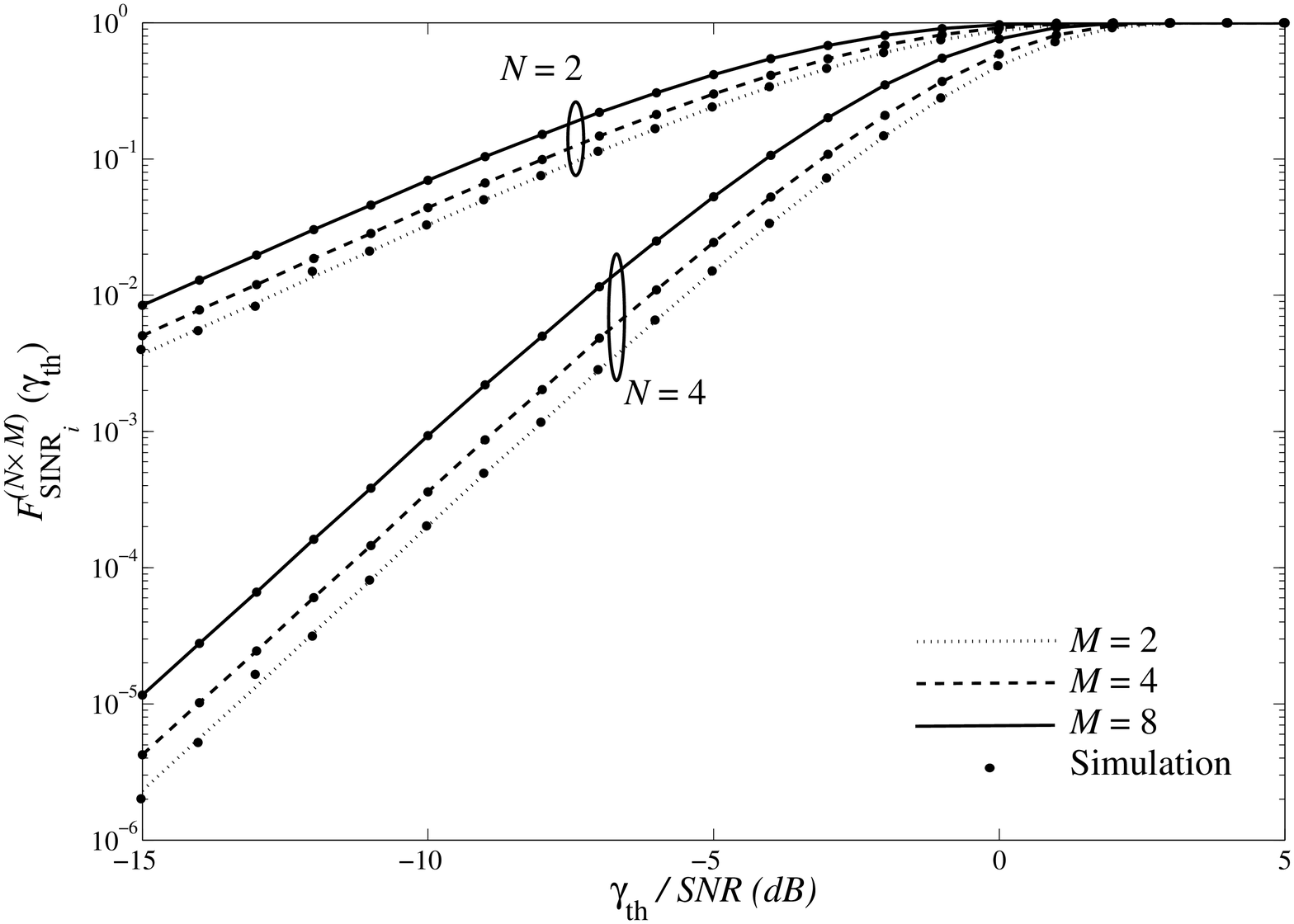}
\caption{CDF of the received SINR of a $N\times M$ system vs. various values of the normalized outage threshold, when $\{d_{i}\}^{M}_{i=1}=0.8+0.05i$.}
\label{fig6}
\end{figure}

\begin{figure}[!t]
\centering
\includegraphics[trim=1.5cm 0.3cm 2.5cm 1.2cm, clip=true,totalheight=0.25\textheight]{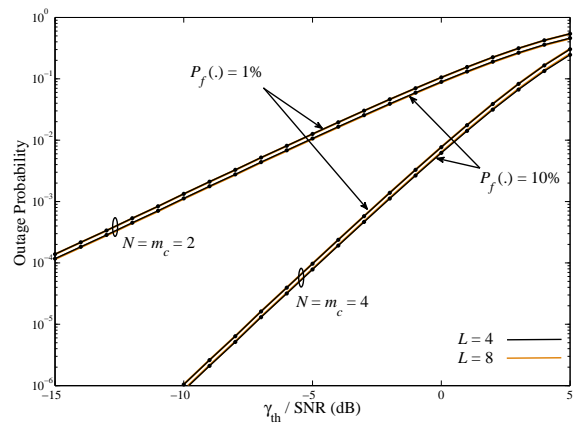}
\caption{Outage probability of the considered scheme vs. various values of the normalized outage threshold for identical link distances $\{d_{i}\}^{M}_{i=1}=0.8$, while $m_{p}=2$.}
\label{fig7}
\end{figure}

\begin{figure}[!t]
\centering
\includegraphics[trim=1.5cm 0.3cm 2.5cm 1.2cm, clip=true,totalheight=0.25\textheight]{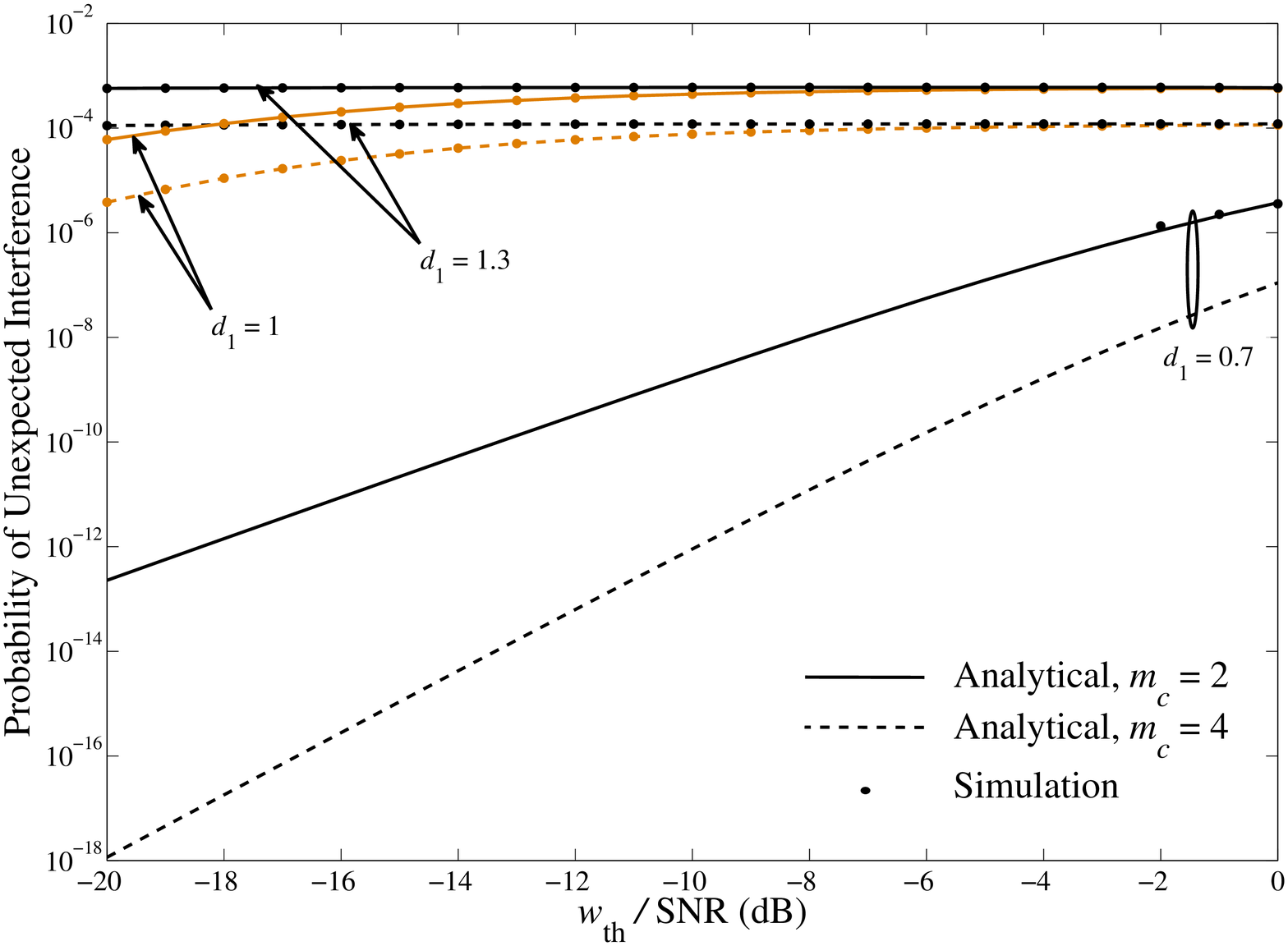}
\caption{Probability of unexpected interference to the primary nodes vs. various values of the normalized outage threshold (for the primary service), when $N=m_{p}=4$.}
\label{fig8}
\end{figure}

In this section, analytical results are presented and cross-compared with Monte-Carlo simulations. There is a good match between all the analytical and the respective simulation results and, hence, the accuracy of the proposed approach is verified. Henceforth, for notational simplicity and without loss of generality, we assume a common path-loss exponent $\omega=4$, corresponding to a classical macro-cell urban environment \cite[Table 2.2]{b:goldsmith}, while we fix the probability of transmission for all the primary nodes $P^{p}_{\mathcal{A}}=0.5$. Also, we set $\alpha=0.1$, $\sigma^{2}_{p}=1$ and $p_{\max}=20$dBm, while all the primary nodes use $p_{\max}$ for their transmissions. Some of the following numerical results are presented with respect to the input SNR of the primary nodes, referred as $\text{SNR}\triangleq p_{\max}/N_{0}$. 

Figures \ref{fig2} and \ref{fig3} present the ROC curves for the scenario of non-identical and identical statistics, respectively. Obviously, the performance of detection probability against false alarm probability improves for higher number of receive antennas. This is further enhanced when the available number of samples is increased. In addition, the presence of more primary users degrades the detection performance, since adding more unknown primary signals would be indistinguishable from noise. This result is in agreement with \cite[Fig. 7]{j:mckay}.

{\color{black}Similar conclusions can be drawn from Fig. \ref{fig4}, where the AUC performance is presented as a means of a more concrete performance tool in the entire energy threshold region, not only for the optimum $\lambda^{\star}$. In fact, the unconditional (average) AUC performance is depicted against different distances between the primary nodes and secondary receiver. It can be seen that the detection accuracy is reduced for far-distanced links, as expected, due to the unavoidable propagation attenuation on the received signals. Severe fading due to propagation losses results to noise-like signals. On the other hand, increasing SNR and/or the number of available samples for sensing result to a more accurate detection performance (i.e., AUC tends to unity).} Additionally, the presence of more receive antenna elements enhances the detection performance of the secondary receiver, as indicated in Fig. \ref{fig5}. This occurs due to the increased spatial diversity for higher $N$ values, which is manifested by capturing many different spatial observations for the same sample time-instance.

In Fig. \ref{fig6}, the CDF of the considered (virtual) $N\times M$ MIMO system is presented with non-identical statistics, where the analytical curves are based on (\ref{cdfsinr}). As can be seen, the performance improves for higher number of receive antennas with fixed number of simultaneously transmitting nodes, because of the emerged diversity gain. On the other hand, the performance is reduced for higher number of simultaneously transmitting nodes and a fixed number of receive antennas, since adding more co-channel interfering signal power degrades the total SINR. Importantly, the difference between the analytical curves and simulation points is rather marginal (there is a difference due to the approximation stage in (\ref{sinr}), yet it is rather negligible), which enhances the efficiency of the proposed scheme.

Moreover, Fig. \ref{fig7} demonstrates the total (unconditional) outage performance for some selected system scenarios. It is obvious that the target on the false alarm probability (i.e., the efficiency of detection scheme) and the available spatial DOFs play a key role to the outage probability. Finally, Fig. \ref{fig8} presents the probability of unexpected interference at the primary nodes. To preserve non-symmetrical distances (i.e., non-identical statistics appropriate for practical setups) let $\bar{q}_{j,i}\triangleq d_{1} (0.01i+0.01j)$. Interestingly, the latter probability is reduced for small link-distances between secondary and primary nodes and/or increased number of secondary transmitters. As an illustrative example, the cases when $d_{1}<0.7$ return negligible probability of unexpected interference for practical applications (i.e., below $10^{-6}$).

\section{Conclusion}
\label{Conclusion}
A D-MIMO cognitive (secondary) system was investigated, which operates under the presence of multiple primary nodes/users. A novel communication protocol was presented and evaluated when the secondary receiver utilizes MMSE detection. New analytical expressions for important performance metrics were derived in closed form, such as the outage and detection probabilities, {\color{black}the unconditional AUC}, and the impact of the transmission power from the secondary to the primary system. It was demonstrated that the probability of unexpected interference to the primary nodes remains quite low, by following the proposed guidelines, while the performance of the secondary system is directly associated with the signal detection accuracy.

\appendix
\subsection{Derivation of (\ref{phi})}
\label{appa}
\numberwithin{equation}{subsection}
\setcounter{equation}{0}
From (\ref{mse}), it holds that
\begin{align}
\nonumber
\text{MSE}_{i}&=\mathbb{E}\left[\left(s_{i}-\boldsymbol{\phi}_{i}^{\mathcal{H}}\mathbf{y}\right)\left(s_{i}-\boldsymbol{\phi}_{i}^{\mathcal{H}}\mathbf{y}\right)^{\mathcal{H}}\right]=1+\boldsymbol{\phi}^{\mathcal{H}}_{i}\mathbf{A}\boldsymbol{\phi}_{i}-s_{i}\mathbf{y}^{\mathcal{H}}\boldsymbol{\phi}_{i}-\boldsymbol{\phi}^{\mathcal{H}}_{i}\mathbf{y}s^{\mathcal{H}}_{i}\\
&=1+\left(\boldsymbol{\phi}^{\mathcal{H}}_{i}-(\mathbf{g}_{i}+\boldsymbol{\epsilon}_{i})^{\mathcal{H}}\mathbf{A}^{-1}\right)\mathbf{A} \left(\boldsymbol{\phi}^{\mathcal{H}}_{i}-(\mathbf{g}_{i}+\boldsymbol{\epsilon}_{i})^{\mathcal{H}}\mathbf{A}^{-1}\right)^{\mathcal{H}}-(\mathbf{g}_{i}+\boldsymbol{\epsilon}_{i})^{\mathcal{H}}\mathbf{A}^{-1}(\mathbf{g}_{i}+\boldsymbol{\epsilon}_{i}),
\label{msexpand}
\end{align} 
where $\mathbf{A}\triangleq \mathbb{E}[\mathbf{y}\mathbf{y}^{\mathcal{H}}]=\mathbf{C}\diag\{\beta_{j}\}^{M}_{j=1}\mathbf{C}^{\mathcal{H}}+N_{0}\mathbf{I}_{N}$ represents the covariance matrix of the received signal. Since only the first term of (\ref{msexpand}) depends on $\boldsymbol{\phi}_{i}$, the optimal solution that minimizes MSE$_{i}$ is $\boldsymbol{\phi}_{i}=\mathbf{A}^{-1}(\mathbf{g}_{i}+\boldsymbol{\epsilon}_{i})$. Finally, noticing that $(\mathbf{G}+\mathbf{E})=\mathbf{C}\diag\{\sqrt{\beta_{j}}\}^{M}_{j=1}$, (\ref{phi}) can be easily extracted.

\subsection{Derivation of (\ref{cdfsinr})}
\label{appb}
\numberwithin{equation}{subsection}
\setcounter{equation}{0}
From (\ref{sinrr}), it holds that $\text{Pr}[\text{SINR}_{i}\leq x]$ equals (\ref{cdfsinr}). Hence, the derivation of $F_{\Phi_{i}}(\cdot)$ is required to obtained the CDF of SINR for the $i$th transmitted stream. To this end, $F_{\Phi_{i}}(\cdot)$ is derived in a closed form as \cite[Eq. (11)]{j:GaoMMSE}
\begin{align}
F^{(N\times M)}_{\Phi_{i}}(y)=1-\exp\left(-\frac{N_{0}}{\beta_{i}}y\right)\sum^{N}_{i=1}\frac{\text{Ai}(y)\left(\frac{N_{0}}{\beta_{i}}y\right)^{i-1}}{(i-1)!},
\label{phicdfgao}
\end{align}
where $\text{Ai}(y)=1$ when $N\geq M+i-1$, or 
\begin{align*}
\text{Ai}(y)\triangleq \frac{1+\displaystyle \sum^{N-i}_{j=1}\:\:\sum_{1\leq n_{1}<\cdots<n_{j}\leq M}\scriptstyle \left(\frac{\beta_{n_{1}}}{\beta_{i}}\frac{\beta_{n_{2}}}{\beta_{i}}\cdots \frac{\beta_{n_{j}}}{\beta_{i}}\right)y^{j}}{\displaystyle \prod^{M}_{\begin{subarray}{c}n=1\\n\neq i\end{subarray}}\scriptstyle \left(1+\frac{\beta_{n}}{\beta_{i}}y\right)}
\end{align*}
when $N<M+i-1$. With the aid of \cite[Eqs. (65) and (70)]{j:mmsesuraweera}, a combined formation of the latter expression can directly be derived in (\ref{phicdf}) and (\ref{phicdfiid}) for the non-identical and identical statistics, correspondingly.

\subsection{Derivation of (\ref{out})}
\label{appe}
\numberwithin{equation}{subsection}
\setcounter{equation}{0}
Outage probability can be modeled by using the total probability theorem. Specifically, an outage event may occur if one of the following conditions hold: (a) when there is no active (transmitting) primary node, the receiver accurately senses the idle spectrum, and evaluates outage probability under the presence of $m_{c}$ independent signals; (b) when there is a miss detection event (i.e., the complement of detection probability) under the presence of one primary node averaging over its related probability; or (c) when there is a miss detection event under the presence of two primary nodes averaging over its related probability, and so on.

Condition (a) is explicitly defined in the first term of (\ref{out}), while conditions (b), (c) and so on are modeled by the second term of (\ref{out}) involving nested finite sum series (corresponding to the cases from $m_{c}+1$ to $m_{c}+m_{p}$ total active transmitted streams). Using (\ref{cdfsinr}), (\ref{pd}), (\ref{pf}), and (\ref{thr}) into (\ref{out}), outage probability can be directly computed in a closed-form, concluding the proof.

{\color{black}
\subsection{Derivation of (\ref{condauc})}
\label{appj}
\numberwithin{equation}{subsection}
\setcounter{equation}{0}
Plugging the first derivative of the false alarm probability $\partial P_{f}(\lambda')/\partial \lambda'=\lambda'^{2N L-1}\exp(-\lambda'^{2}/2)\\/(2^{N L-1}\Gamma(N L))$ and (\ref{pdcond}) into (\ref{pdpf}), we have that
\begin{align}
\text{AUC}(\mathcal{Y})=\frac{1}{2^{N L}\Gamma(N L)} \int^{\infty}_{0}t^{N L-1}\exp\left(-\frac{t}{2}\right)Q_{N L}\left(\sqrt{\frac{2 L \sigma^{2}_{p}\mathcal{Y}}{N_{0}}},\sqrt{t}\right)dt.
\label{pdpfff}
\end{align}
A closed-form solution for the latter expression was reported in \cite[Eq. (8)]{j:sofotasiosmarcum}. Thus, after some simple manipulations, (\ref{condauc}) is extracted.

\subsection{Derivation of (\ref{uncondauc})}
\label{appk}
\numberwithin{equation}{subsection}
\setcounter{equation}{0}
In principle, the unconditional AUC can be captured as $\overline{\text{AUC}}\triangleq \int^{\infty}_{0}\text{AUC}(x)f_{\mathcal{Y}}(x)dx$. Hence, using (\ref{condauc}) and (\ref{pdfmin}), while utilizing \cite[Eq. (7.621.4)]{tables}, we have that
\begin{align}
\nonumber
\overline{\text{AUC}}=&1-\sum^{N L-1}_{l=0}\sum^{m_{p}}_{s=1}\sum^{N-1}_{\begin{subarray}{c}t_{1}=0\\t_{1}\neq t_{s}\end{subarray}}\cdots \sum^{N-1}_{\begin{subarray}{c}t_{m_{p}}=0\\t_{m_{p}}\neq t_{s}\end{subarray}}\frac{(N L)_{l}\beta^{-t_{1}}_{1}\cdots \beta^{-N}_{s}\cdots \beta^{-t_{m_{p}}}_{m_{p}}\Gamma\left(\sum^{m_{p}}_{\begin{subarray}{c}l=1\\l\neq s\end{subarray}}t_{l}+N\right)}{l!2^{N L+l}t_{1}!\cdots t_{m_{p}}!\Gamma(N)\left(\sum^{m_{p}}_{t=1}\frac{1}{\beta_{t}}+\frac{L\sigma^{2}_{p}}{N_{0}}\right)^{\sum^{m_{p}}_{\begin{subarray}{c}l=1\\l\neq s\end{subarray}}t_{l}+N}}\\
&\times {}_2F_{1}\left(N L+l,\sum^{m_{p}}_{\begin{subarray}{c}l=1\\l\neq s\end{subarray}}t_{l},N L;\frac{L \sigma^{2}_{p}}{2N_{0}\left(\sum^{m_{p}}_{t=1}\frac{1}{\beta_{t}}+\frac{L\sigma^{2}_{p}}{N_{0}}\right)}\right).
\label{uncondauccc}
\end{align}
Using \cite[Eq. (07.23.03.0145.01)]{wolfram} into (\ref{uncondauccc}), we arrive at (\ref{uncondauc}).
}

\subsection{Derivation of (\ref{ptrclformmm}) and (\ref{ptrclform})}
\label{appg}
\numberwithin{equation}{subsection}
\setcounter{equation}{0}
Regarding the derivation of (\ref{ptrclformmm}) and recalling the Rayleigh fading condition, the PDF of the SNR for the probe message transmitted from the receiver becomes
\begin{align}
f_{X_{R}}(x)=
\left\{
\begin{array}{c l}     
    \frac{N_{0}\exp\left(-\frac{N_{0}x}{p_{\max}\bar{X}_{R}}\right)}{p_{\max}\bar{X}_{R}}, & Q_{R}<\frac{w_{\text{th}}}{p_{\max}},\\
    & \\
    \frac{N_{0}Q_{R}\exp\left(-\frac{N_{0}Q_{R}x}{w_{\text{th}}\bar{X}_{R}}\right)}{w_{\text{th}}\bar{X}_{R}}, & Q_{R}>\frac{w_{\text{th}}}{p_{\max}}.
\end{array}\right.
\label{fg2reporttt}
\end{align}
where $X_{R}$ and $\bar{X}_{R}$ denotes the instantaneous and average input SNR of the receiver. Hence, it yields that
\begin{align}
F_{X_{R}}(x)=1-\left(1-F_{X_{R}|p_{\max}}(x)\right)\left(1-F_{X_{R}|\frac{w_{\text{th}}}{Q_{R}}}(x)\right)=1-\exp\left(-\frac{N_{0}\left(\frac{1}{p_{\max}}+\frac{Q_{R}}{w_{\text{th}}}\right)x}{\bar{X}_{R}}\right).
\label{Fg2reporttt}
\end{align}
By differentiating (\ref{Fg2reporttt}), the corresponding PDF follows the classical exponential PDF with the yielded transmission power $p_{R}$ as defined in (\ref{ptrclformmm}).

Based on (\ref{eq1new}) and (\ref{remsignal}), we have that the actual channel matrix for the primary nodes can be expressed as $\mathbf{G}_{p}=\mathbf{C}_{p}\diag\{\sqrt{\beta_{i}}\}^{m_{p}}_{i=1}-\mathbf{E}_{p}$. Although the instantaneous values of $\mathbf{E}$ are not available, its distribution is known from (\ref{channeljoint}), using MMSE channel estimation. It easily follows that
\begin{align}
\mathbf{G}_{p}\overset{\text{d}}=\mathbf{C}_{p}\diag\left\{\sqrt{(\beta_{i}-\hat{\beta}_{i})\alpha^{2}}\right\}^{m_{p}}_{i=1}.
\label{gpdistr}
\end{align}
Thus, using the standard PDF/CDF expressions for chi-squared RVs, the maximum squared column norm of $\mathbf{G}_{p}$ is distributed as
\begin{align}
\nonumber
f_{\max_{i}\{\|\mathbf{g}_{i}\|^{2}\}^{m_{p}}_{i=1}}(x)&=\sum^{m_{p}}_{i=1}f_{b_{R,i}\mathcal{\chi}^{2}_{2 N}}(x)\prod^{m_{p}}_{\begin{subarray}{c}l=1\\l\neq i\end{subarray}}F_{b_{R,i}\mathcal{\chi}^{2}_{2 N}}(x)\\
&=\sum^{m_{p}}_{i=1}\frac{x^{N-1}\exp\left(-\frac{x}{b_{R,i}}\right)}{b^{N}_{R,i}\Gamma(N)}\prod^{m_{p}}_{\begin{subarray}{c}l=1\\l\neq i\end{subarray}}\left(1-\exp\left(-\frac{x}{b_{R,i}}\right)\sum^{N-1}_{k=0}\frac{\left(\frac{x}{b_{R,i}}\right)^{k}}{k!}\right).
\label{fmax}
\end{align}
By invoking the product expansion identities \cite[Eq. (6)]{j:miridakisrelay}, (\ref{fmax}) becomes after some simple manipulations
\begin{align}
\nonumber
f_{\max_{i}\{\|\mathbf{g}_{i}\|^{2}\}^{m_{p}}_{i=1}}(x)=&\sum^{m_{p}}_{i=1}\sum^{m_{p}}_{l=0}\frac{(-1)^{l}b^{N}_{R,i}}{l!\Gamma(N)}\underbrace{\sum^{m_{p}}_{n_{1}=1}\cdots \sum^{m_{p}}_{n_{l}=1}}_{n_{1}\neq \cdots \neq n_{l}\cdots \neq l}\sum^{N-1}_{k_{1}=0}\cdots \sum^{N-1}_{k_{l}=0}\left(\prod^{l}_{t=1}\frac{b_{R,k_{t}}}{k_{t}!}\right)\\
&\times \exp\left(-\left(b_{R,i}+\sum^{l}_{t=1}b_{R,n_{t}}\right)x\right) x^{\sum^{l}_{t=1}k_{t}+N-1}.
\label{fmaxx}
\end{align} 
Thereby, recognizing that $Q=\int^{\infty}_{0}x f_{\max_{i}\{\|\mathbf{g}_{i}\|^{2}\}^{m_{p}}_{i=1}}(x)dx$ and utilizing \cite[Eq. (3.381.4)]{tables}, (\ref{ptrclform}) is derived.

\ifCLASSOPTIONcaptionsoff
  \newpage
\fi

\bibliographystyle{IEEEtran}
\bibliography{IEEEabrv,References}

\end{document}